\newcommand{\concept}[1]{{\bf\em\boldmath #1}}
\newcommand{\set}[1]{\left\{ #1 \right\}}
\renewcommand{\epsilon}{\varepsilon}
\newcommand{\A}{\mathcal{A}}
\newcommand{\B}{\mathcal{B}}
\renewcommand{\C}{\mathcal{C}}
\newcommand{\D}{\mathcal{D}}
\newcommand{\N}{\mathbb{N}}
\newcommand{\M}{\mathcal{M}}
\newcommand{\Last}{\text{Last}}
\newcommand{\Run}{\text{Runs}}
\newcommand{\Buchi}{\text{Büchi}}
\newcommand{\CoBuchi}{\text{co-Büchi}}
\newcommand{\Tree}{\text{Trees}}
\newcommand{\initState}{\ensuremath{s_{\init}} }
\newcommand{\tree}{t}
\newcommand{\setstates}{Q}
\renewcommand{\state}{q}
\newcommand{\trans}{\Delta}
\newcommand{\init}{\textit{in}}
\newcommand{\tpath}{\lambda}
\newcommand{\plays}{\textrm{Plays}}
\newcommand{\thepath}{\mathsf{path}}
\newcommand{\Paths}{\mathsf{Paths}}
\newcommand{\forallone}{\forall^{=1}}
\newcommand{\forallonep}{\forall^{=1}_{\!\thepath}}
\newcommand{\MSOorStS}{MSO}
\newcommand{\MSO}{\MSOorStS\xspace}
\newcommand{\MSOzero}{\MSOorStS+$\forallone$\xspace}
\newcommand{\MSOzeropath}{\MSOorStS+$\forallonep$\xspace}
\newcommand{\CTLs}{\textsf{\ensuremath{\textsf{\textup{CTL}}^*}}\xspace}
\DeclareMathOperator{\suc}{succ}
\newcommand*{\pipe}{\ |\ }
\newcommand*{\words}{\{0,1\}^*}
\newcommand{\arena}{\mathcal{G}}
\newcommand{\vini}{v_\init}
\newcommand{\Eloise}{Éloïse\xspace}
\newcommand{\Abelard}{Abélard\xspace}
\newcommand{\Random}{Random\xspace}
\newcommand{\Eloisei}{E}
\newcommand{\Abelardi}{A}
\newcommand{\Ei}{E}
\newcommand{\Ai}{A}
\newcommand{\VE}{V_\Eloisei}
\newcommand{\VA}{V_\Abelardi}
\newcommand{\VR}{V_R}
\newcommand{\game}{\mathbb{G}}
\newcommand{\WC}{\Omega}
\newcommand{\Plays}{\mathrm{Plays}}
\newcommand{\play}{\lambda}
\renewcommand{\phi}{\varphi}
\newcommand{\strat}{\sigma}
\newcommand{\sini}{v_{\mathrm{in}}}
\newcommand{\states}{V}
\newcommand{\proba}{\mathrm{Pr}}
\newcommand{\rank}{\mathrm{rank}}
\begin{document}

\theoremstyle{acmdefinition}
\newtheorem{remark}[theorem]{Remark}

\title{Alternating Tree Automata with Qualitative Semantics}

\author{Rapha\"el Berthon}
\affiliation{%
  \institution{Universit\'e libre de Bruxelles}
  \city{Bruxelles}
  \country{Belgium}}
\email{Raphael.Berthon@ulb.ac.be}

\author{Nathana\"el Fijalkow}
\affiliation{%
  \institution{CNRS \& LaBRI}
  \city{Bordeaux}
  \country{France}}
\email{Nathanael.Fijalkow@labri.fr}

\author{Emmanuel Filiot}
\affiliation{%
  \institution{Universit\'e libre de Bruxelles}
  \city{Bruxelles}
  \country{Belgium}}
\email{efiliot@ulb.ac.be}

\author{Shibashis Guha}
\affiliation{%
  \institution{Universit\'e libre de Bruxelles}
  \city{Bruxelles}
  \country{Belgium}}
\email{Shibashis.Guha@ulb.ac.be}

\author{Bastien Maubert}
\orcid{0000-0002-9081-2920}
\affiliation{%
  \institution{Universit\`a degli Studi di Napoli ``Federico II''}
  \department{DIETI}
  \city{Naples}
  \country{Italy}}
\email{bastien.maubert@gmail.com}

\author{Aniello Murano}
\affiliation{%
  \institution{Universit\`a degli Studi di Napoli ``Federico II''}
  \department{DIETI}
  \city{Naples}
  \country{Italy}}
\email{murano@na.infn.it}

\author{Laureline Pinault}
\affiliation{%
  \institution{Univ Lyon, CNRS, ENS de Lyon, UCB Lyon 1, LIP}
  \city{Lyon}
  \country{France}}
\email{Laureline.Pinault@ens-lyon.fr}

\author{Sophie Pinchinat}
\affiliation{%
  \institution{Univ Rennes, CNRS, IRISA}
  \city{Rennes}
  \country{France}}
\email{Sophie.Pinchinat@irisa.fr}

\author{Sasha Rubin}
\affiliation{%
  \institution{University of Sydney}
  \city{Sydney}
  \country{Australia}}
\email{rubin@forsyte.at}

\author{Olivier Serre}
\orcid{0000-0001-5936-240X}
\affiliation{%
  \institution{Université de Paris, IRIF, CNRS}
  \streetaddress{Bâtiment Sophie Germain, Case courrier 7014, 
8 Place Aurélie Nemours}
  \city{Paris Cedex 13}
  \postcode{75205}
  \country{France}}
\email{Olivier.Serre@cnrs.fr}

\begin{abstract}
We study alternating automata with qualitative semantics over infinite binary trees: alternation means that two opposing players construct a decoration of the input tree called a run, and the qualitative semantics says that a run of the automaton is accepting if almost all branches of the run are accepting. In this paper we prove a positive and a negative result for the emptiness problem of alternating automata with qualitative semantics.

The positive result is the decidability of the emptiness problem for the case of Büchi acceptance condition. An interesting aspect of our approach is that we do not extend the classical solution for solving the emptiness problem of alternating automata, which first constructs an equivalent non-deterministic automaton. Instead, we directly construct an emptiness game making use of imperfect information.

The negative result is the undecidability of the emptiness problem for the case of co-Büchi acceptance condition. This result has two direct consequences: the undecidability of monadic second-order logic extended with the qualitative path-measure quantifier, and the undecidability of the emptiness problem for alternating tree automata with non-zero semantics, a recently introduced probabilistic model of alternating tree automata.
\end{abstract}

\acmSubmissionID{}

%
%

\begin{CCSXML}
<ccs2012>
<concept>
<concept_id>10003752.10003766.10003770</concept_id>
<concept_desc>Theory of computation~Automata over infinite objects</concept_desc>
<concept_significance>500</concept_significance>
</concept>
<concept>
<concept_id>10003752.10003766.10003772</concept_id>
<concept_desc>Theory of computation~Tree languages</concept_desc>
<concept_significance>500</concept_significance>
</concept>
<concept>
<concept_id>10003752.10003766.10003773.10003775</concept_id>
<concept_desc>Theory of computation~Quantitative automata</concept_desc>
<concept_significance>500</concept_significance>
</concept>
<concept>
<concept_id>10003752.10003790</concept_id>
<concept_desc>Theory of computation~Logic</concept_desc>
<concept_significance>500</concept_significance>
</concept>
<concept>
<concept_id>10003752.10003753.10003757</concept_id>
<concept_desc>Theory of computation~Probabilistic computation</concept_desc>
<concept_significance>500</concept_significance>
</concept>
</ccs2012>
\end{CCSXML}

\ccsdesc[500]{Theory of computation~Automata over infinite objects}
\ccsdesc[500]{Theory of computation~Tree languages}
\ccsdesc[500]{Theory of computation~Quantitative automata}
\ccsdesc[500]{Theory of computation~Logic}
\ccsdesc[500]{Theory of computation~Probabilistic computation}

%
%

\keywords{tree automata, $\omega$-regular conditions, almost-sure semantics}

\maketitle

\renewcommand{\shortauthors}{R. Berthon et al.}

\section{Introduction}
The study of tree-automata models can be organised by distinguishing three semantic features.

  The first feature is the \textit{operational mode}:
  deterministic, non-deterministic, universal, probabilistic, and
  alternating, are the most studied notions. Intuitively, in each case, an
  automaton reading an input tree (with labels on the nodes)
  constructs a decoration of this tree called a run, which is itself a
  tree. The run labels nodes of the tree by states respecting the
  local constraints imposed by the transition relation of the
  automaton. In the deterministic case, a state and a letter uniquely
  determine the labels at the level below in the run, hence there is a
  unique run. In the non-deterministic, universal, and alternating
  case, there may be several valid transitions at each node, yielding
  possibly several runs on a single tree. In the non-deterministic
  case we say that the tree is accepted if there exists an accepting
  run, \textit{i.e.} the choices are existential. In the universal
  case, we say that the tree is accepted if all runs are accepting,
  \textit{i.e.} the choices are universal. The alternating case
  unifies both previous cases by introducing existential and universal
  transitions.

The second feature is the \textit{branching semantics}. The classical one says that a run is accepting if all its branches satisfy a given acceptance condition. We are concerned in this paper with the qualitative semantics, which is an alternative branching semantics introduced by Carayol, Haddad, and Serre~\cite{CHS14}. The qualitative semantics says that a run is accepting if almost all its branches satisfy a given acceptance condition, in other words if by picking a branch uniformly at random it almost-surely satisfies the condition. The paper~\cite{CHS14} showed that non-deterministic and probabilistic tree automata with qualitative semantics are both robust computational models with appealing algorithmic properties.

The third feature is the \textit{acceptance condition} (on branches), with $\omega$-regular conditions such as B{\"u}chi and parity conditions being the most important for their tight connections to logical formalisms; see, e.g., \cite{Thomas97}.

\vskip1em
One motivation for studying tree automata with qualitative semantics is to extend the deep connections between automata and monadic second-order logic (MSO) which hold for the classical semantics~\cite{Rabin:TAMS69}.
Indeed, the general goal is to construct decidable extensions of MSO over infinite trees; we review some of the efforts and results obtained in this direction. A (unary) generalised quantifier is of the form ``the set of all sets $X$ that satisfy $\varphi$ has the property $C$'', where $C$ is a property of sets. For instance, the ordinary existential quantifier $\exists X. \varphi$ corresponds to the property $C$ of being a non-empty set. More interestingly, the quantifier ``there exist infinitely many $X$ such that $\varphi$'' corresponds to the property $C$ of being infinite. It turns out that certain cardinality quantifiers such as ``there exist infinitely many $X$'' and ``there exist continuum many $X$'' do not add expressive power to \MSO over the infinite binary tree (in fact, they can be effectively eliminated)~\cite{Rabinovich:FI10}.
On the other hand, adding the generalised quantifier ``the set of all sets $X$ satisfying $\varphi$ has Lebesgue-measure one'' results in an undecidable theory~\cite{michalewski2018monadic}. A weaker version of this quantifier is ``the set of paths of the tree that satisfy $\varphi$ has Lebesgue-measure one''. Intuitively, this quantifier, written $\forallonep$,  means that a random path almost-surely satisfies $\varphi$, where a random path is generated by repeatedly flipping a coin to decide whether to go left or right. It was proved in~\cite{Bojanczyk16,BGK17} that adding the quantifier $\forallonep$ to a restriction of \MSO called ``thin MSO'' yields a decidable logic, but the decidability of \MSOzeropath was left open in~\cite{michalewski2016measure,michalewski2018monadic}.
The emptiness problem for non-deterministic  parity tree automata with
qualitative semantics can easily be expressed using \MSOzeropath, as
already observed in~\cite{michalewski2018monadic}, and this is also the case for universal tree automata with qualitative semantics.

\smallskip
In this paper, we initiate the study of alternating automata with qualitative semantics, and focus on the emptiness problem.
We present {a positive result and a negative result} that delimit a clear and sharp decidability frontier.

\subsection*{Contributions}
{The positive result is the decidability of the emptiness problem for the case of the B{\"u}chi acceptance condition (Theorem~\ref{thm:decidability-altBuchi}).} 

The usual roadmap for solving the emptiness problem for alternating automata is to first construct an equivalent non-deterministic automaton, and then to construct an emptiness game for the non-deterministic automaton, i.e., a game such that the first player wins if and only if the automaton is non-empty. This first step is an effective construction of an equivalent non-deterministic automaton, which in some cases is not possible, unknown, or computationally too expensive. In the case at hand the second situation arises: we do not know whether alternating automata with qualitative semantics can effectively be turned into equivalent non-deterministic ones. {We remark that our undecidability result shows that there is no such effective construction for co-B{\"u}chi conditions (but there might be one for the B{\"u}chi conditions).}

{Here, instead,} we develop a new approach which directly constructs an emptiness game for the alternating automaton. The emptiness game we construct uses \emph{imperfect information}. Our construction extends the notion of \emph{blindfold games} {of} Reif~\cite{Reif79}, used to check universality of non-deterministic automata over finite words. The key ingredient to proving the correctness of our imperfect information emptiness game is a new positionality result for stochastic B\"uchi games on {certain \emph{infinite} arenas (that we call \emph{chronological})}. To the best of our knowledge, very few positionality results are known in the literature that combine both {stochastic features and infinite arenas}; { a notable exception is~\cite{Kucera_foxes}.}

\smallskip
The negative result is the undecidability of the emptiness problem for the case of the co-B\"uchi acceptance condition. In fact, our main technical contribution (Theorem~\ref{thm:undecidability-univcoB}) is to establish the undecidability already for universal automata (a special subclass of alternating automata).

We establish this by a chain of reductions that consider various classes of automata (both on infinite words and trees). We initially resort to the known undecidability of the value 1 problem for probabilistic automata on finite words~\cite{GO10} to deduce the undecidability of the emptiness problem for \emph{simple} probabilistic co-B{\"u}chi automata on infinite words (Proposition~\ref{prop-simple-cobuchi}). Here, \emph{simple} means that the transitions of the automaton only involve probabilities in $\{0,\frac{1}{2},1\}$. Then, we reduce the latter problem to the original emptiness problem for universal co-B{\"u}chi tree automata with qualitative semantics, hence proving our negative result. The correctness of this last reduction relies on particular properties of another class of automata, namely, \emph{probabilistic tree automata}.

Our negative result has two interesting consequences: the undecidability of \MSOzeropath, and of the emptiness problem for alternating tree automata with non-zero semantics, a model combining sure, almost-sure, and positive semantics and studied in~\cite{fournier2018alternating}.

\subsection*{Related Work}
The study of automata with qualitative semantics was initiated in~\cite{CHS14} with several decidability results. The first result is a polynomial-time algorithm {entailing} the decidability of the emptiness problem for non-deterministic parity tree automata with qualitative semantics~\cite{CHS14}, obtained through a polynomial reduction to the almost-sure problem for Markov decision processes (for which a polynomial-time algorithm is known from~\cite{CourcoubetisY90}).  {This} reduction extends to probabilistic tree automata with qualitative semantics, showing an equivalence with partial-observation Markov decision processes. {It} is then used to prove the decidability of the emptiness problem for probabilistic B{\"u}chi tree automata with qualitative semantics~\cite{CHS14}.

Alternation was later considered by Fijalkow, Pinchinat, and Serre in~\cite{fijalkow2013emptiness} where the focus was on designing a {novel emptiness checking} procedure working directly on alternating automata, \textit{i.e.} directly building an emptiness \emph{imperfect-information} game without {making use of the} intermediate transformation to a non-deterministic automaton: this was successfully applied to classical alternating parity tree automata as well as to alternating B{\"u}chi tree automata with qualitative semantics ({see Theorem~\ref{thm:decidability-altBuchi}}).

This line of work was pursued using the related model of non-zero automata. The first decidability result was obtained for the subclass of zero automata~\cite{BGK17}, yielding the decidability of the thin restriction of \MSOzeropath. A second decidability result concerned the class of alternating zero automata~\cite{fournier2018alternating}, restricting the abilities of the second player. This {latter} result is applied to solve the satisfiability problem of a probabilistic extension of \CTLs. The general case {of non-zero automata} was left open. We close it {negatively {(thanks to our negative result)} since alternating tree automata with non-zero semantics subsume universal tree automata with qualitative semantics}.

A side result in~\cite{fijalkow2013emptiness} states the undecidability {of the emptiness problem for} alternating co-B{\"u}chi automata with qualitative semantics.
The proof, not given in the conference proceedings, is rather sketchy in the full version~\cite{fijalkow2013emptinessComplete}. The proof we give here (Theorem~\ref{thm:undecidability-univcoB}) follows the same lines but clarifies a technical loophole in the original proof. Indeed, the last reduction requires the undecidability of the emptiness problem for probabilistic co-B{\"u}chi \textit{simple} automata over infinite words, where \emph{simple} means that the transitions probabilities are either $0$, $\frac{1}{2}$, or $1$.
The undecidability result was known only for general automata, {while we refine it for the simple ones, thus filling in the gap of the undecidability proof in the full version~\cite{fijalkow2013emptinessComplete}}. 

More recently, Berthon \emph{et al.}~\cite{Berthonetal19} proved {the}  slightly weaker undecidability result {that} emptiness is undecidable for universal \emph{parity} tree automata with qualitative semantics. {Although} their proof follows the same lines as~\cite{fijalkow2013emptinessComplete}, {the result is weaker because they need a stronger acceptance condition to obtain simple automata and prove the correctness of the original reduction}. {Still,} their result is strong enough to {entail} the undecidability of \MSOzeropath, the main contribution of their work.

There is another proof of the undecidability of \MSOzeropath, obtained independently and {at the same time} {as~\cite{Berthonetal19}} by Boja{\'n}czyk, Kelmendi, and Skrzypczak~\cite{BKS19}. Their proof technique is very different from ours: they obtain undecidability by a direct encoding of two-counter machines into the logic. However, the core technical part of the paper is not the reduction from counter machines (which is nevertheless tricky), but a {crucial technical lemma} used to encode runs of counter machines and to prove the correctness of the reduction\footnote{More precisely, the lemma states that for a set $D$ of pairwise disjoint finite paths in the infinite binary tree called \emph{intervals}, there is an \MSOzeropath formula that, when true at the root of the infinite binary tree, is equivalent to having with probability $1$, a branch $\pi$ and some integer $\ell$ such that with finitely many exceptions, if an interval intersects $\pi$ then it is of length $\ell$.}. {The proof of this lemma is involved: it} mostly relies on tools (such as asymptotic behaviours of vector sequences) previously used to show undecidability of MSO+U logic over infinite words. {We remark that  \MSOzeropath is known as} MSO+$\nabla$ in~\cite{BKS19}.

\subsection*{Organisation of the Paper}
Section~\ref{sec-prelim} presents the different classes of automata used for our main undecidability result, 
relying on Markov chains as a unifying notion {to define} acceptance by these different automata. 
Section~\ref{sec-dec-Buchi} gives our decidability result for alternating B{\"u}chi tree automata. Section~\ref{sec-undec-automata} is about our undecidability result {for universal co-Büchi tree automata}, while Section~\ref{sec-cor} presents its consequences for \MSOzeropath (Section~\ref{sec:msozero}) and for alternating automata with non-zero semantics (Section~\ref{sec:nonz}).

\section{Preliminaries}
\label{sec-prelim}
Throughout the paper we implicitly fix a finite alphabet $\Sigma$.
We denote by $\Sigma^*$ the set of \concept{finite words} over
$\Sigma$ and by $\Sigma^\omega$ the set of \concept{infinite words}
over $\Sigma$. {We let $\epsilon$ denote the empty word, and for a word
$u\in \Sigma^*$, $|u|$ denotes its length. Finally, we write 
  $\Sigma^k$ for the set of words over $\Sigma$ of length $k$.}

{The \concept{infinite binary tree} is $\set{0,1}^*$, elements of $\{0,1\}^*$ are called its \concept{nodes}, and elements of $\{0,1\}^\omega$ are called its (infinite) \concept{branches}}.  For a finite alphabet
$\Sigma$, a \concept{$\Sigma$-tree} is a function
$t : \set{0,1}^* \to \Sigma$ and we write $\Tree(\Sigma)$ for the set
of $\Sigma$-trees. For a branch $b=b_1b_2\cdots \in \set{0,1}^\omega$
we denote by {$t[b]=t(\epsilon)t(b_1)t(b_1b_2)t(b_1b_2b_3)\cdots \in \Sigma^\omega$ the
infinite word read in $t$ along the branch $b$.}

A \concept{distribution} over a set $Q$ is a function
$\delta : Q \rightarrow [0,1]$ such that
$\sum_{q \in Q} \delta(q) = 1$.  Any distribution $\delta$ considered
in the paper is implicitly assumed to have a finite support,
i.e. $\{q\in Q\mid \delta(q)\neq 0\}$ is finite.  
{For $Q'\subseteq Q$, we write $\sum_{q\in Q'}p_q \cdot q$ for the
distribution that assigns probability $p_q$ to $q \in Q'$ and $0$ to $q\in Q\setminus Q'$. For example, 
$\frac{1}{2} q_1 + \frac{1}{2} q_2$ is the distribution $\delta$ such
that $\delta(q_1) = \delta(q_2) = \frac{1}{2}$, unless $q_1=q_2$ in which case $\delta(q_1) = \delta(q_2) =1$, and $\delta(q)=0$ for every other element $q$.}  The set of distributions over $Q$ is denoted
$\D(Q)$.

A \concept{Markov chain} $\M=(S,\initState,T)$ is given by a
\emph{possibly infinite} set of states $S$, an initial state
$\initState \in S$, and a probability transition function $T : S \to \D(S)$. An
(infinite) \concept{path} in $\M$ is an infinite sequence of states 
$s_0s_1s_2\ldots\in S^\omega$ such that $s_0=\initState$ and
$T(s_i)(s_{i+1})>0$ for every $i\geq 0$. A \concept{cone} is a set of paths of
the form $u\cdot S^\omega$ for some $u\in S^*$. Now, consider the
$\sigma$-algebra over paths in $\M$ built from the set of cones. Then,
a classical way to equip this $\sigma$-algebra with a probability
measure $P$ is to recursively define it on the set of cones as follows:
$$P(s_0s_1\cdots s_k \cdot S^\omega)=
		\begin{cases}
			1 & \text{if $k=0$}\\			P(s_0\cdots s_{k-1} \cdot S^\omega)\cdot T(s_{k-1})(s_k) & \text{otherwise}\\	
		\end{cases}
	$$ 
and then to extend it (uniquely) to the $\sigma$-algebra thanks to Carath{\'e}odory's extension theorem (we refer the reader to Reference~\cite{Puterman94} for more details on this classical construction).

When needed, for a given length $k$, we also see $P$ as a probability measure on paths of length $k$ (i.e. elements in $S^k$) by defining  the probability measure of $u\in S^k$ as the probability of the cone $u\cdot S^\omega$.

\subsection{Two-Player Perfect-Information Stochastic Games}\label{section:perfect_Info_Games}

A \concept{graph} is a pair $G = (V,E)$ where $V$ is a (possibly infinite) set of \concept{vertices} and $E\subseteq V\times V$ is a set of \concept{edges}.
{For every vertex $v$, let $E(v) = \set{w \mid (v,w) \in E}$, and say that $v$ is a \concept{dead-end} if $E(v) = \emptyset$.
In the rest of the paper, we only consider graphs of finite out-degree, i.e. such that $|E(v)|$ is finite for every vertex $v\in V$, and without dead-ends.}

A (turn-based) \concept{stochastic arena} is a tuple $\arena = (G,\VE,\VA,\VR,\delta,\vini)$ 
where $G = (V,E)$ is a graph, $(\VE,\VA,\VR)$ is a partition of the vertices among two players, 
\Eloise and \Abelard, and an extra player \Random, $\delta:\VR\rightarrow \D(V)$ is a map 
such that for all $v \in \VR$ the support of $\delta(v)$ is included in $E(v)$, and $\vini \in V$ is an \concept{initial} vertex.
In a vertex $v \in \VE$ (\emph{resp.} $v \in \VA$) \Eloise (\emph{resp.} \Abelard) chooses a successor vertex from $E(v)$, 
and in a random vertex $v \in \VR$, a successor vertex is chosen according to the probability distribution $\delta(v)$. A
\concept{play} $\play=v_0v_1v_2\cdots$ is an infinite sequence of vertices starting
from the initial vertex, i.e. $v_0=\vini$, and  {such that, for every $k\geq 0$, $v_{k+1}\in E(v_k)$ if $v_k\in\VE\cup\VA$ and $\delta(v_k)(v_{k+1})>0$ if $v_k\in\VR$}. A \concept{history} is a finite prefix of a play.

A (pure\footnote{We only consider pure strategies, as these are sufficient for our purpose. However, our
  main results on positionality
  (Theorems~\ref{theo:positionalityChrono} and
  \ref{theo:positionalityChronoReach}) remain true for
    randomised strategies as later discussed in
  Remark~\ref{rk:purevsrandomised}.}) \concept{strategy} for \Eloise
is a function $\strat_\Ei: V^*\cdot\VE \rightarrow V$ such that for
every history $\lambda\cdot v \in V^*\cdot\VE$ one has
$\strat_\Ei({\lambda\cdot} v) \in E(v)$.  Strategies of \Abelard are defined
likewise, and usually denoted $\strat_\Ai$.

A play $\tpath = v_0v_1v_2\ldots$ is \concept{consistent} with a pair
of strategies $(\sigma_\Ei,\sigma_\Ai)$ for \Eloise and \Abelard if
the players always choose their move according to their
strategy. Formally, for all $k\geq 0$ the following should hold: if
$v_k$ is controlled by \Eloise then
$v_{k+1}=\sigma_\Ei({v_0\ldots}v_k)$ and if it is controlled
by \Abelard then $v_{k+1}=\sigma_\Ai({v_0\ldots}v_k)$. {The set of plays consistent with
$(\sigma_\Ei,\sigma_\Ai)$ is denoted
$\plays^{\arena}_{\sigma_\Ei,\sigma_\Ai}$, and a history is consistent with
$(\sigma_\Ei,\sigma_\Ai)$ if it is the finite prefix of some play in $\plays^{\arena}_{\sigma_\Ei,\sigma_\Ai}$}.

{In order to equip the set $\plays^{\arena}_{\sigma_\Ei,\sigma_\Ai}$ with a probability measure, {we define the following} Markov chain
$\mathcal{M}^{\arena}_{\sigma_\Ei,\sigma_\Ai}$: its set of states is the set of {histories consistent with $(\sigma_\Ei,\sigma_\Ai)$}, its initial state is $\vini$, and its probability transition function $T$ is defined
 by}
{
$$
T(\lambda\cdot v) = 
\begin{cases}
\lambda\cdot v\cdot \sigma_\Ei(\lambda\cdot v) & \text{if } v \in V_\Ei\\
\lambda\cdot v\cdot \sigma_\Ai(\lambda\cdot v) & \text{if } v \in V_\Ai\\
\sum_{v'\in E(v)}\delta(v)(v')\ \lambda\cdot v\cdot v' & \text{if } v \in V_R
\end{cases}$$}

Then, the set $\plays^{\arena}_{\sigma_\Ei,\sigma_\Ai}$ of those plays consistent with $(\sigma_\Ei,\sigma_\Ai)$ is {in bijection with} the set of infinite paths in the Markov chain $\mathcal{M}^{\arena}_{\sigma_\Ei,\sigma_\Ai}$. Hence, the associated probability measure $P^{\arena}_{\sigma_\Ei,\sigma_\Ai}$ can be used as a probability measure for measurable subsets of $\plays^{\arena}_{\sigma_\Ei,\sigma_\Ai}$. \\
{When $\arena$ is understood, we omit it and simply write $P_{\sigma_\Ei,\sigma_\Ai}$ and $\plays_{\sigma_\Ei,\sigma_\Ai}$.}

A \concept{winning condition} is a subset\footnote{Formally, one needs to require that $\WC$ is measurable for the probability measure $P_{\sigma_\Ei,\sigma_\Ai}$, which is always trivially true in this paper.} 
$\WC\subseteq V^\omega$ and a (two-player perfect-information) \concept{stochastic game} is a pair 
$\game = (\arena,\WC)$.

A strategy $\sigma_\Ei$ for \Eloise is \concept{surely winning} if $\Plays_{\sigma_\Ei,\sigma_\Ai} \subseteq \WC$ for every strategy $\sigma_\Ai$ of \Abelard; 
it is  \concept{almost-surely winning} if $P_{\sigma_\Ei,\sigma_\Ai}(\WC) = 1$
for every strategy $\sigma_\Ai$ of \Abelard. 
Similar notions for \Abelard are defined dually.
{\Eloise \concept{ surely} (resp.\ \concept{almost-surely}) \concept{wins} if she has a \concept{surely} (resp.\ \concept{almost-surely}) winning strategy.}

A \concept{reachability game} is a {stochastic game whose winning condition is } of the
form $V^*FV^\omega$ for some subset $F\subseteq V$, i.e. winning plays
are those that eventually visit a vertex in $F$.  A \concept{Büchi
  game} is a {stochastic game whose winning condition is } of the form
$\bigcap_{i\geq 0}V^iV^*FV^\omega$ for some subset $F\subseteq V$,
i.e. winning plays are those that infinitely often visit a vertex in
$F$.  Finally, a \concept{co-Büchi game} is {stochastic game whose winning condition is }of the form $V^*(V\setminus F)^\omega$ for some subset
$F\subseteq V$, i.e. winning plays are those that finitely often visit
a vertex in $F$. When it is clear from the context, we write
$\game=(\arena,F)$ (i.e. write $F$ instead of $\WC$) for the
reachability (\emph{resp.} Büchi, co-Büchi) game {relying on} $F$.

A \concept{positional strategy} $\strat$ is {a strategy} that does not require {any}
memory, i.e.\ such that for any two histories of the form
$\play \cdot v$ and $\play' \cdot v$, one has
$\strat(\lambda \cdot v) = \strat(\lambda' \cdot v)$. 
{Positional strategies only depend on the current vertex, and for convenience
they are written as functions from $V$ into $V$.}

A game is \emph{deterministic} whenever $\VR = \emptyset$. It is well-known (see e.g.~\cite{Zie98}) that positional strategies suffice to surely win in deterministic games with {\emph{a parity winning condition}, which we do not define but captures the reachability, Büchi, and co-Büchi winning conditions that we are interested in.}

\begin{theorem}[Positional determinacy ~\cite{Zie98}]\label{theo:posDet}
Let $\game$ be a deterministic parity game. Then, either \Eloise or \Abelard has a positional surely winning strategy.
\end{theorem}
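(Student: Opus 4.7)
The plan is to prove positional determinacy by induction on the number $k$ of distinct priorities appearing in the parity condition, following Zielonka's recursive decomposition. The central tool is the notion of attractor: for a player $P \in \{\Ei,\Ai\}$ and a target $T \subseteq V$, the attractor $\textup{Attr}_P(T)$ is the least set of vertices from which $P$ can force reaching $T$, computed as a (possibly transfinite) fixpoint of the monotone operator that adds $P$-vertices with some successor already in the set and non-$P$-vertices with all successors in the set. Since $V$ is a set and edges are finite-out-degree, this fixpoint exists, and it comes with a canonical positional attractor strategy for $P$ on $\textup{Attr}_P(T)$. Attractors enjoy the crucial property that $V \setminus \textup{Attr}_P(T)$ induces a well-defined subarena: no remaining vertex is a dead-end, because $P$-vertices there have at least one non-attracted successor (else they would have been attracted), and non-$P$-vertices there have all successors outside the attractor.

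For the base case ($k=1$) the dominant priority is the same on every play, so one player wins no matter how they play. For the inductive step, let $d$ be the largest priority in $\arena$; assume without loss of generality that $d$ is even, the odd case being entirely symmetric. Put $N = \{v \in V : \textup{priority}(v) = d\}$ and $A = \textup{Attr}_\Ei(N)$. The subarena on $V \setminus A$ uses only priorities strictly smaller than $d$, so by induction it splits positionally into $W'_\Ei$ and $W'_\Ai$ with positional surely winning strategies for each player. If $W'_\Ai = \emptyset$, then Eloise wins positionally everywhere in $V$: she plays her attractor strategy on $A \setminus N$, an arbitrary fixed choice on $N$, and her inductive winning strategy on $V \setminus A$; any resulting play either visits $N$ infinitely often (winning since $d$ is the largest priority and is even) or eventually remains in $V \setminus A$, where she already wins. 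Otherwise, set $B = \textup{Attr}_\Ai(W'_\Ai)$ in the full arena and recurse on the strictly smaller subarena $V \setminus B$; stitching yields a complete decomposition, with Abelard winning on $B$ by attracting to $W'_\Ai$ and then continuing with her inductive positional strategy.

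The main obstacle, and the only real subtlety, is justifying termination of the recursion when $V$ is infinite: the outer induction on $k$ is finite and causes no issue, but the inner recursion on $V \setminus B$ shrinks the arena by a set that may be empty after transfinite attractor computations. To handle this cleanly in the infinite-state setting, I would either well-order the recursion by the Zielonka tree of the priority function (which is finite and depends only on the $k$ priorities), or replace the recursive construction of strategies by the Klarlund–Walukiewicz progress measure argument: assign to each vertex an ordinal signature describing the loser's best delay, show positionality of the strategy that always decreases signatures when possible, and conclude by transfinite induction on signatures. Either route yields positional surely winning strategies for both players on complementary regions, establishing the theorem.
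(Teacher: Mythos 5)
The paper does not actually prove this statement: it is quoted as a classical result with a citation to Zielonka, whose proof is exactly the attractor-based recursive decomposition you sketch, so your approach matches the intended source and is correct in outline --- including your correct identification of the only genuine subtlety on infinite arenas, namely termination of the inner recursion, which transfinite recursion (taking unions at limit stages) or the signature/progress-measure argument both resolve. One small slip worth fixing: in justifying that $V\setminus\mathrm{Attr}_P(T)$ has no dead-ends you have swapped the two cases --- a $P$-vertex outside the attractor in fact has \emph{all} successors outside (a single attracted successor would already attract it), whereas a non-$P$-vertex outside is only guaranteed \emph{some} successor outside --- but the conclusion you draw from it is unaffected.
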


For stochastic games, the following result is well-known (see e.g.~\cite{GZ07} for a slightly more general result).

\begin{theorem}\label{theo:posDetSt}
Let $\game$ be a stochastic parity game played on a \emph{finite} arena. 
If \Eloise almost-surely wins then she {has an a positional almost-surely winning strategy}.
\end{theorem}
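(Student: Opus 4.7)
The plan is to invoke the abstract positionality framework of Gimbert and Zielonka~\cite{GZ07}, whose main theorem is tailored for exactly this kind of statement. I would proceed in three main steps.

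First, I would observe that Éloïse almost-surely wins from $\vini$ if and only if the \emph{value} of the stochastic game at $\vini$ (the sup-inf over strategies of the winning probability) equals $1$. For finite stochastic parity games, values are well-defined and attained by both players (quantitative determinacy), so it suffices to exhibit a positional strategy for Éloïse that is \emph{optimal}, i.e.\ that achieves the value of every vertex against any Abélard strategy. Specialising such a strategy to an initial vertex of value $1$ yields the desired positional almost-surely winning strategy.

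Second, I would apply the Gimbert-Zielonka criterion: if a winning condition $\WC$ is \emph{prefix-independent}, \emph{monotonic}, and \emph{selective}, then both players have pure positional optimal strategies in every finite stochastic game with condition $\WC$. Parity conditions are patently prefix-independent, and verifying monotonicity and selectivity amounts to elementary manipulations on parity priorities (the maximal priority seen infinitely often is preserved under appropriate shuffles of plays). This verification is precisely one of the flagship applications worked out in~\cite{GZ07}, so I would quote it rather than redo it. Applying the criterion to the reachability, Büchi and co-Büchi conditions mentioned in the theorem is immediate, as these are all parity conditions.

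Third, from any optimal positional $\strat^*$ so obtained and any vertex $v$ of value $1$, the definition of optimality forces the probability of winning against every Abélard strategy to be at least the value, hence equal to $1$; in particular from $\vini$ under our almost-sure hypothesis, $\strat^*$ is almost-surely winning.

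The main obstacle, were one to aim for a self-contained proof without invoking \cite{GZ07}, would be to directly construct a positional strategy by induction on the number of priorities combined with attractor-style computations on the finite arena. The delicate point is that almost-sure reachability in stochastic games does \emph{not} coincide with sure reachability obtained by treating random vertices as adversarial, since the adversary cannot indefinitely avoid positive-probability transitions. This mismatch is classically overcome via \emph{end-component} analysis in the style of Courcoubetis-Yannakakis, which relies crucially on the finiteness of the arena (and is precisely why the theorem does not extend verbatim to the infinite arenas considered later in the paper).
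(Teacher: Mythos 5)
The paper offers no proof of this statement---it is presented as well-known with a pointer to \cite{GZ07}---and your proposal is precisely the intended justification: invoke positional optimality for finite stochastic parity (priority) games and observe that an optimal positional strategy, played from a vertex of value $1$, guarantees winning probability $1$ against every \Abelard strategy. The one small imprecision is that ``monotonic and selective'' is the Gimbert--Zielonka criterion for \emph{deterministic} games; in the stochastic setting the relevant sufficient condition is the submixing property (or the one-player-to-two-player lifting theorem), but since positional optimality for parity/priority conditions on finite arenas is exactly the flagship result of the cited work, your argument goes through unchanged.
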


Note that dropping the assumption that the arena is finite {substantially}
changes the situation. Indeed, for infinite arenas, even with a
reachability condition and assuming finite out-degree, almost-surely
winning strategies for \Eloise may require infinite memory
\cite[Proposition~5.7]{Kucera_foxes}.  However, {imposing a natural structural
restriction on the (possibly infinite) arena, namely to be chronological, yields a
result like Theorem~\ref{theo:posDetSt} for Büchi games, see Theorem~\ref{theo:positionalityChrono}.}

\subsection{Two-Player Imperfect-Information Stochastic Büchi Games}
\label{sec:imperfect-info}

We now introduce a subclass of the usual games with
imperfect information which is essentially a stochastic version of the model in~\cite{CDHR07}. Our model of imperfect-information games is quite restrictive compared to general models developed in~\cite{GS09,BGG09,DBLP:journals/tocl/Chatterjee014,CarayolLS18}, as in our setting \Abelard is perfectly informed. 
However, it turns out to be expressive enough to be used as a central tool to check emptiness for alternating Büchi tree automata with qualitative semantics.

A \concept{{stochastic} arena of imperfect information} is a tuple $\arena =
(\states,A,T,\sim,v_\init)$ where $\states$ is a \emph{finite} set of
{vertices}, $v_\init \in \states$ is an initial {vertex}, $A$ is the finite
alphabet of \Eloise's actions, $T \subseteq \states \times A \times
\D(\states)$ is a stochastic transition relation and $\sim$
is an equivalence relation over $\states$ {that denotes the observational capabilities of \Eloise  and therefore imposes restrictions on legitimate strategies for her (see further)}.  We additionally require
that for all $(v,a) \in \states \times A$ there is at least one
$\delta \in \D(\states)$ such that $(v,a,\delta) \in T$. 

A \concept{play} starts from the initial vertex $v_\init$
and proceeds as follows: \Eloise plays an action $a_0 \in A$, then
\Abelard resolves the non-determinism by choosing a distribution
$\delta_0$ such that $(v_\init,a_0,\delta_0) \in T$ and finally a new
vertex is randomly chosen according to $\delta_0$.  Then,
\Eloise plays a new action, \Abelard resolves the non-determinism and
a new vertex is randomly chosen, and so on forever.  Hence,
a play is an infinite word
$v_\init a_0 \delta_0 v_1 a_1 \delta_1 v_2\cdots \in (\states \cdot A
\cdot \D(\states))^\omega$ . A \concept{history} is a prefix of a
play ending in a vertex in $\states$.

An \concept{imperfect-information stochastic Büchi game} is a pair
$\game = (\arena,F)$ where $\arena$ is a {stochastic} arena
of imperfect information with a subset of states $F\subset \states$
used to define the Büchi winning condition {as follows:} a
play
$\play=v_{0} {a_0 \delta_0}v_1{a_1 \delta_1} v_2
\cdots$ in $\game$ is won by \Eloise if, and only if, the set
$\{i\geq 0\mid v_i\in F\}$ is infinite, i.e. winning plays are those
that infinitely often visit a vertex in $F$.

{The imperfect-information of the game is modelled by the equivalence relation
$\sim$ that conveys which vertices \Eloise cannot distinguish, namely those that are $\sim$-equivalent. {We will write 
$\states_{/_\sim}$ for the set of equivalence classes of $\sim$ in $\states$,
and for every $v \in \states$, we will write $[v]_\sim$ for its
$\sim$-equivalence class.}

Relation $\sim$ plays a crucial role when defining
strategies for \Eloise. Intuitively, \Eloise should not play differently
in two indistinguishable plays, where the indistinguishability of
\Eloise is based on \emph{perfect recall}~\cite{FaginHMV95}:}
\Eloise cannot distinguish two histories
$v_\init a_0 \delta_0 v_1 a_1 \delta_1 \cdots v_\ell$ and
$v_\init' a_0' \delta_0' v_1' a_1' \delta_1' \cdots v_\ell'$ {whenever} $v_i \sim v_i'$ for all $i \leq \ell$ and $a_i = a_i'$ for all
$i < \ell$. {Note that in particular, \Eloise does not observe \Abelard's choices for the distributions along a play.}  Hence, a (pure\footnote{Again, as for perfect information
  games, we do not consider randomised strategies as pure strategies
  are the right model for our purpose.}) \concept{strategy} for
\Eloise is a function
$\strat_\Ei : (\states_{/_\sim} \cdot  A)^* \cdot (\states_{/_\sim}) \rightarrow A$
assigning an action to every set of indistinguishable histories.  \Eloise \concept{ respects a strategy}
$\strat_\Ei$ during a play
$\play = v_\init a_0 \delta_0 v_1 a_1 \delta_1 \cdots$ if
$a_{i+1} = \strat_\Ei([v_\init]_\sim a_0 [v_1]_\sim \cdots[v_i]_\sim)$,
for all $i \geq 0$.

A strategy for \Abelard is defined as a function $\strat_\Ai:(V\cdot A\cdot \D(V))^*(V\cdot A)\rightarrow \D(V)$ such that $(v,a,\strat_\Ai(\lambda\cdot v\cdot a ))\in T$ for every $\lambda\in (V\cdot A\cdot \D(V))^*$. \Abelard \concept{ respects a strategy} $\strat_\Ai$ during a play
$\play = v_\init a_0 \delta_0 v_1 a_1 \delta_1 \cdots$ if
$\delta_{i} = \strat_\Ai(v_\init a_0 \delta_0 v_1 a_1 \delta_1 \cdots v_i a_i)$,
for all $i \geq 0$.

Exactly as in the perfect-information setting, one associates with a pair of strategies $(\strat_\Ei,\strat_\Ai)$ the set $\plays^{\arena}_{\strat_\Ei,\strat_\Ai}$ of those plays where \Eloise (\emph{resp.} \Abelard) respects $\strat_\Ei$ (\emph{resp.} $\strat_\Ai$), and equip it with a probability measure.

Finally, a strategy $\strat_\Ei$ for \Eloise is
\concept{almost-surely winning} if, against any strategy $\strat_\Ai$ for \Abelard, the set of winning plays for \Eloise has measure $1$ for the probability measure on $\plays^{\arena}_{\strat_\Ei,\strat_\Ai}$.

\begin{remark}
It is important to note that \Eloise may not observe whether a vertex belongs to $F$ 
as we do not require that $v \sim v' \Rightarrow (v\in F\Leftrightarrow v'\in F)$. 
In particular, this has to be taken into account when eventually solving the game. 
\end{remark}

The following decidability {result} will be crucial in Section~\ref{sec:emptiness-checking}.

\begin{theorem}[\cite{DBLP:journals/tocl/Chatterjee014,CarayolLS18}]\label{theo:AS_impInfo}
	Let $\game$ be an  {imperfect-information stochastic} Büchi game. One can decide in exponential time whether \Eloise has an almost-surely winning strategy in $\game$.
\end{theorem}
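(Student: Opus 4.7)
The plan is to reduce the question to sure winning in a perfect-information two-player B\"uchi game on an exponentially larger arena via a knowledge-based subset construction, and then invoke a polynomial-time algorithm for the latter. The construction has two ingredients: a \emph{belief} component that tracks, at every history, the set of vertices \Eloise considers possible given her observations, and a \emph{Miyano--Hayashi tracking} component that witnesses the B\"uchi condition.

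First, I would build a belief arena $\arena^b$ whose vertices are pairs $(B, B^\sharp)$ with $B^\sharp \subseteq B \subseteq \states$, the initial vertex being $(\{v_\init\}, \{v_\init\})$. A round from $(B, B^\sharp)$ proceeds in two steps. \Eloise picks an action $a \in A$, which determines the set $\mathsf{post}(B,a) = \{v' \in \states \mid \exists v \in B,\ \exists \delta \text{ with } (v, a, \delta) \in T \text{ and } v' \in \Supp(\delta)\}$. Then \Abelard picks adversarially an observation class $o \in \states_{/_\sim}$ such that $o \cap \mathsf{post}(B,a) \neq \emptyset$, and the new belief becomes $B' = o \cap \mathsf{post}(B,a)$; the tracking component is reset to $B' \setminus F$ whenever $B^\sharp = \emptyset$, and otherwise updated to $\bigl(o \cap \mathsf{post}(B^\sharp, a)\bigr) \setminus F$. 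The winning condition on $\arena^b$ is the B\"uchi condition asking that a vertex of the form $(B, \emptyset)$ is visited infinitely often. The arena $\arena^b$ has at most $4^{|\states|}$ vertices, hence is of exponential size in $|\states|$.

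Second, I would prove the correctness equivalence: \Eloise has an almost-surely winning strategy in $\game$ if and only if she has a surely winning strategy in $\arena^b$. There is a natural bijection between observation-based strategies for \Eloise in $\game$ and strategies in $\arena^b$, so this is really a statement about the adequacy of the subset abstraction. The direction from $\arena^b$ to $\game$ is the more conceptual one: if the belief strategy forces $B^\sharp = \emptyset$ infinitely often in every play of $\arena^b$, then along every consistent play in $\game$ one visits $F$ infinitely often with probability one, via a Borel--Cantelli-style argument that exploits the fact that between two ``empty $B^\sharp$'' events, every still-possible vertex has visited $F$. The converse direction rests on a \emph{qualitative collapse}: almost-sure winning against \Abelard's randomised distributions is equivalent to sure winning against an abstracted \Abelard who only chooses the supports of these distributions; this relies on the finiteness of $\states$ and on the one-sided imperfect-information assumption that \Abelard is perfectly informed.

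Finally, the lifted game $\arena^b$ is a perfect-information two-player B\"uchi game of exponential size, and can be solved in polynomial time in $|\arena^b|$ by standard fixpoint attractor computations, yielding the claimed exponential-time bound. The main obstacle, which justifies deferring to the cited references rather than giving a short self-contained proof, is the qualitative collapse step: proving that randomised transitions can be replaced by purely nondeterministic ones without changing almost-sure winnability requires a delicate measure-theoretic argument, and relies crucially on the asymmetry ``\Abelard perfectly informed, \Eloise imperfectly informed''; this property fails once both players have imperfect information, a regime where the analogue of Theorem~\ref{theo:AS_impInfo} is known to be undecidable.
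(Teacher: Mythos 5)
The paper does not prove this statement; it imports it from the two cited references. Your sketch, however, contains a genuine gap: the central equivalence ``\Eloise almost-surely wins $\game$ iff she surely wins the belief game $\arena^b$'' is false, and it fails in both of the ways you would need it to hold. First, the ``qualitative collapse'' step --- replacing the random choice of a successor according to a distribution by an adversarial choice of an element of its support --- does not preserve almost-sure winnability. Already in the perfect-information, one-player special case (where every $\sim$-class is a singleton and $\arena^b$ degenerates to the game in which \Abelard resolves the coin flips) the two notions come apart: take a single non-final vertex $v$ with an action whose unique distribution sends the play to a final vertex $f$ with probability $\frac{1}{2}$ and back to $v$ with probability $\frac{1}{2}$, and from $f$ deterministically back to $v$. \Eloise wins the Büchi condition almost surely (second Borel--Cantelli lemma), but in your belief game \Abelard picks the observation class of $v$ at every round, the tracking component never empties, and \Eloise does not win surely. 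Almost-sure winning of Büchi and reachability objectives is simply not a ``sure'' property of the supports; it is characterised via positive-probability predecessor operators inside a nested fixpoint, not via a reduction to sure winning.

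Second, even if one stays within the almost-sure realm, the belief abstraction itself is inadequate here: for exactly this class of games (\Eloise imperfectly informed, \Abelard perfectly informed, stochastic transitions), one of the two cited references --- Chatterjee and Doyen's ``how to win when belief fails'' --- exhibits a game in which \Eloise has an almost-surely winning strategy for reachability but no strategy that is a function of her current belief (with or without a Miyano--Hayashi refinement) is almost-surely winning. The exponential-time algorithm therefore does not run on the plain subset construction: it computes the almost-sure winning set as a nested fixpoint over a richer exponential-size lattice, and the witnessing strategies use exponential memory recording strictly more than the belief. Your complexity accounting (exponential arena, polynomial Büchi solving) would be fine if the reduction were sound, but the reduction is precisely the part that fails, which is why this theorem is best left as a citation rather than re-proved by subset construction.
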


\subsection{Probabilistic Automata on Finite Words} 
Probabilistic automata on finite words {generalize non-deterministic automata by letting the transition function map} a state and a letter to a distribution over states~\cite{Rabin:IC63}.
The reference book for early developments on probabilistic automata is due to Paz~\cite{paz2014introduction}.

A \concept{probabilistic word automaton} is a tuple $\A = (Q, q_\init, \delta)$, where $Q$ is the finite set of states,
$q_\init$ is the initial state, and $\delta : Q \times \Sigma \to \D(Q)$ is the transition function. 
We say that a probabilistic automaton is \concept{simple} when the {distribution} $\delta(q,a)$ is always of the form $\frac{1}{2} q_1 + \frac{1}{2} q_2$ (possibly with $q_1=q_2$).

Intuitively, a finite word $u=u_1 \dots u_{k} \in \Sigma^*$ induces a set of runs of $\A$ each of which comes with a probability of being realised; if one fixes a set of final states, the \concept{acceptance probability} of $u$ by $\A$ is {the mere} sum of the probabilities of those runs of $\A$ over $u$ that end in a final state. To formally define {acceptance probability} (and {extend it further} to richer settings) we associate with $\A$ and $u$ a Markov chain $\M_\A^u$ as follows.

{The Markov chain $\M_\A^u$ has the (finite) set of states 
$Q \times \set{0,\dots,k}$, the initial state $(q_\init,0)$, and the}
 probability transition function $T_\A^u$ defined for every $(p,i)\in Q\times
 \set{0,\dots,k-1}$  (we do not define it for states of the form $(p,k)$ that will be useless) by
\[T_\A^u((p,i)) = \sum_{q \in Q} \delta(p,u_i)(q) \cdot (q,i+1)\]

Call a finite path of length $k+1$ of $\M_\A^u$ a \concept{run} of
$\A$ on $u$ and let $P_\A^u$ be the probability measure on runs induced
by $\M_\A^u$. Given a subset of (final) states $F \subseteq Q$, call
$\Last(F)$ the set of runs whose (first coordinate of the) last state
is in $F$. We then define the acceptance probability of $\A$ over $u$
as $P_\A^u(\Last(F))$.

{A classic decision problem for probabilistic word automata is the \concept{value $1$ problem}.}

\begin{center}\fbox{
\begin{tabular}[t]{rl}
\textbf{INPUT:} & A probabilistic word automaton $\A$ and a subset $F \subseteq Q$ \\
\textbf{QUESTION:} & $\forall \varepsilon > 0, \exists u \in \Sigma^*,\ P_\A^u(\Last(F)) \ge 1 - \varepsilon$?
\end{tabular}
}\end{center}

\smallskip

Informally, {the value $1$ problem} asks for the existence of words {with  acceptance probabilities that are arbitrarily close to $1$. In this case,} we say that $\A$ has value $1$.
The undecidability of the value~$1$ problem for simple probabilistic automata was first established in~\cite{GO10} (see also~\cite{FGKO15} and~\cite{Fijalkow17} for a simple proof).

\begin{theorem}[\cite{GO10}]\label{thm:value1}
The value $1$ problem for simple probabilistic word automata is undecidable.
\end{theorem}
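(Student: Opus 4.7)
The plan is to proceed by a chain of reductions, starting from the classical undecidable emptiness problem for general probabilistic word automata under a strict threshold~\cite{paz2014introduction} and ending at the value~$1$ problem for \emph{simple} probabilistic word automata. Three ingredients are needed: an amplification step that turns a threshold gap into a value-$1$ gap, a discretization step that replaces arbitrary rational transition probabilities by dyadic ones, and a binary simulation step that implements the resulting dyadic distributions using only fair coin flips.

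For amplification, I would construct from a general PA $\A$ with threshold $\lambda$ an automaton $\A'$ that runs $n$ independent copies of $\A$ on the same input (separated by a fresh delimiter letter) and accepts by majority vote. By a Chernoff-type argument, if $\A$ has some word $w$ accepted with probability $> \lambda + \eta$, then on an input of the form $(w\#)^n$ the automaton $\A'$ achieves acceptance probability that tends to $1$ as $n$ grows; conversely, if every word is accepted by $\A$ with probability at most $\lambda$, then the value of $\A'$ stays bounded away from $1$. For discretization, I would round each rational transition probability appearing in $\A'$ to its closest dyadic approximation with denominator $2^m$, choosing $m$ large enough so that the cumulative error over the relevant words is strictly less than the value-$1$ gap produced by the amplification step.

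For binary simulation, each dyadic distribution of denominator $2^m$ can be realised by a binary tree of fair coin flips of depth $m$. Since a simple PA consumes exactly one input letter per coin flip, I would augment the input alphabet with a padding symbol $\bot$ and replace each original letter $a$ by the block $a\,\bot^{m-1}$: the $m-1$ padding letters drive the binary unfolding of one transition of $\A'$. A parallel deterministic verifier would redirect any malformed input (one not of this block shape) to a rejecting absorbing state, so that only well-padded words can approach acceptance probability~$1$. This produces a simple PA whose value is~$1$ iff the original general PA $\A$ has a word accepted with probability exceeding $\lambda$, yielding undecidability.

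The main obstacle will be the quantitative bookkeeping in the last two steps: the discretization error and any loss introduced by the padded gadgets must together remain strictly smaller than the separation between ``value $\geq 1 - \epsilon$'' and ``value bounded away from $1$'' guaranteed by amplification. Designing the padding gadget so that every deviation from the expected block shape is absorbed into the rejecting sink with probability $1$, and carefully tuning~$m$ and~$n$, is the delicate part of the reduction; the rest is essentially a routine product construction.
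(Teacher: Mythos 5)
The paper does not actually prove this theorem; it is imported verbatim from Gimbert and Oualhadj~\cite{GO10}, so your proposal has to be judged against that known proof. Your overall strategy (reduce from threshold emptiness, then make the automaton simple) is the right general shape, and the final padding/fair-coin step is a standard and essentially sound way to turn dyadic transition probabilities into a simple automaton. But the first two steps each contain a genuine gap. The amplification step is the fatal one: a ``majority vote over $n$ independent trials'' cannot be implemented by a \emph{single} probabilistic automaton uniformly in $n$. If the copies are run sequentially on $(w\#)^n$, the automaton must count how many of the $n$ rounds accepted, which needs unboundedly many states; if instead you fix $n$ and take a product automaton, then $\A'$ depends on $n$ and its value is bounded away from $1$ by a quantity depending on $n$, so no single automaton witnesses value $1$. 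This is exactly the difficulty that makes the value-$1$ undecidability a nontrivial theorem. The actual proof replaces majority voting by a constant-memory ``race'' gadget: an initial fair coin sends the run into one of two modes, one of which must see the event ``$\A$ accepts $w$'' in every round and the other of which is killed as soon as the complementary event occurs, so that the acceptance probability on $(w\#)^n$ behaves like $p^n/(p^n+(1-p)^n)$ and tends to $1$ if and only if $p>\tfrac12$ --- no counting required.

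The discretization step is also unsound as stated. Perturbing each stochastic transition matrix by $\varepsilon$ changes the acceptance probability of a word of length $k$ by up to $k\varepsilon$, and the value-$1$ problem quantifies over words of unbounded length, so no choice of $m$ bounds ``the cumulative error over the relevant words''. Worse, when the value equals $1$ there is no positive gap to preserve at all: the supremum is only approached in the limit. The standard remedy is not to round an arbitrary automaton but to arrange the \emph{source} of the reduction to be dyadic (indeed simple) from the outset --- the PCP-style encodings behind the undecidability of threshold emptiness can be set up with power-of-two denominators, and the race gadget itself only uses fair coins. If you restructure your argument around a constant-memory comparison gadget and a dyadic source automaton, keeping your padding construction for the last step, you recover the proof of~\cite{GO10}.
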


\subsection{Probabilistic Automata on Infinite Words}\label{sec:probabilistic-automata-infinite-words}
Baier, Größer, and Bertrand conducted an in-depth study of probabilistic automata over infinite words~\cite{BGB12}.
To define the semantics of a probabilistic word automaton $\A=(Q, q_\init, \delta)$ over an infinite word $w=w_1w_2\cdots$, we proceed as for finite words  
and construct a Markov chain $\M_\A^w$ whose set of states is $Q \times \N$. The initial state is again $(q_\init,0)$, and the probability transition function $T_\A^w$ is still defined by 
$$T_\A^w((p,i)) = \sum_{q \in Q} \delta(p,w_i)(q) \cdot (q,i+1)$$

A \concept{run} of $\A$ on $w$ is now an infinite path in $\M_\A^w$ and the Markov chain yields a probability measure $P_\A^w$ on runs.

{For probabilistic automata on infinite words} we mostly focus on the co-Büchi acceptance condition that
is defined as follows. Given a subset of states $F \subseteq Q$, we
let
$\CoBuchi(F)=(Q\times \mathbb{N})^*((Q\setminus F)\times
\mathbb{N})^\omega$ be the (measurable) set of runs that visit $F$
only finitely often, and, when this set of runs has measure $1$, we
say that $w$ is almost-surely accepted by $\mathcal{A}$ for the
co-Büchi condition $F$, {written $w \in L^{= 1}_{\CoBuchi(F)}(\A)$. Formally,}
$$
L^{= 1}_{\CoBuchi(F)}(\A) = \set{w \in \Sigma^\omega : P_\A^w(\CoBuchi(F)) = 1}.
$$

\begin{example}\label{ex:infinitelymanysharps}
	Let $\Sigma$ be an alphabet and $\sharp \notin \Sigma$ be a fresh symbol.
Let $\C$ be the simple probabilistic co-Büchi automaton {with set {$\set{p_1,p_2}$ of states},  
  initial state $p_1$, and transition function given by:}
  \begin{itemize}
  	\item[-] $\delta(p_1,a) = p_1$ for any {$a\in\Sigma \setminus \set{\sharp}$};
  	\item[-] $\delta(p_1,\sharp) = \frac{1}{2} p_1 + \frac{1}{2} p_2$; and
  	\item[-] $\delta(p_2,a) = p_2$ for any $a\in\Sigma \cup \set{\sharp}$.
  \end{itemize}
  As $p_2$ is absorbing and as moving from $p_1$ to $p_2$ may only
  happen when reading $\sharp$, {the language} 
  $L^{= 1}_{\CoBuchi(\{p_1\})}(\C)$ consists of those infinite words
  over $\Sigma \cup \set{\sharp}$ that contain infinitely many
  occurrences of $\sharp$.
  {Note that we will later use this example as a gadget in the proof of Proposition~\ref{prop-simple-cobuchi}}
\end{example}

The \concept{emptiness problem} for probabilistic co-Büchi word automata with almost-sure semantics is the following decision problem:

\smallskip
\begin{center}\fbox{
\begin{tabular}[t]{rl}
\textbf{INPUT:} & A probabilistic word automaton $\A$ and a set $F \subseteq Q$ \\
\textbf{QUESTION:} & Is $L^{= 1}_{\CoBuchi(F)}(\A) = \emptyset$?
\end{tabular}
}\end{center}
\smallskip

It was shown in~\cite{BGB12} that this problem is
undecidable.

\begin{proposition}[\cite{BGB12}]
\label{prop-cobuchi}
The emptiness problem for probabilistic co-Büchi word automata with almost-sure semantics is undecidable.
\end{proposition}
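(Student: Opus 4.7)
The plan is to reduce from the value~$1$ problem for probabilistic word automata, known to be undecidable by Theorem~\ref{thm:value1}. Given an instance $(\A,F)$ with $\A=(Q,q_\init,\delta)$ over~$\Sigma$, I construct a probabilistic word automaton $\A'$ over $\Sigma \cup \{\sharp\}$ (where $\sharp$ is a fresh symbol) together with a set~$F'$ so that $\A$ has value~$1$ if and only if $L^{=1}_{\CoBuchi(F')}(\A') \ne \emptyset$. Intuitively, $\A'$ reads its input as a sequence of ``blocks'' separated by~$\sharp$, independently simulates~$\A$ on each block starting from $q_\init$, and records a ``failure'' at each $\sharp$ whose preceding block did not end in $F$; the co-Büchi condition then requires that only finitely many failures occur, while a separate gadget forces infinitely many $\sharp$'s.

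Concretely, $\A'$ is the synchronous product of two independent components. The first component $\mathcal A_r$ augments $Q$ with a fresh state $\mathit{fail}$ and is defined as follows: on $a \in \Sigma$ from $q \in Q$ it uses $\delta(q,a)$; on $a \in \Sigma$ from~$\mathit{fail}$ it uses $\delta(q_\init,a)$, so that $\mathit{fail}$ behaves like $q_\init$ when a new block begins; and on~$\sharp$ it transitions from $q \in F$ to $q_\init$ and from $q \in Q \setminus F$ or from~$\mathit{fail}$ to $\mathit{fail}$ (one may assume $q_\init \notin F$ without loss of generality by adding a fresh initial state). The second component is the automaton $\C$ of Example~\ref{ex:infinitelymanysharps}. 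The bad set of the co-Büchi condition on the product is $F' = \{(q,r) : q = \mathit{fail} \text{ or } r = p_1\}$, so that almost-sure acceptance amounts to the $\mathcal A_r$-coordinate visiting $\mathit{fail}$ only finitely often \emph{and} the $\C$-coordinate visiting $p_1$ only finitely often.

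For correctness, consider an input $w = u_1 \sharp u_2 \sharp \cdots$ with infinitely many $\sharp$'s and each $u_i$ non-empty, and let $p_i = P_\A^{u_i}(\Last(F))$. Because every $\sharp$-transition in $\mathcal A_r$ deterministically sends the current state to either $q_\init$ or $\mathit{fail}$, and because $\mathit{fail}$ behaves exactly like $q_\init$ on the next $\Sigma$-letter, the state of $\mathcal A_r$ just before the $i$-th $\sharp$ has, in $\M_{\A'}^w$, the same distribution as the state reached by $\A$ after reading $u_i$ from $q_\init$, independently of everything before. Consequently, the failure events ``the $i$-th $\sharp$ is a failure'' are mutually independent with probability $1-p_i$, so by both directions of the Borel--Cantelli lemma the $\mathcal A_r$-coordinate visits $\mathit{fail}$ only finitely often almost surely iff $\sum_i (1-p_i) < \infty$. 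Combined with the $\C$-coordinate (which, by Example~\ref{ex:infinitelymanysharps}, enforces infinitely many $\sharp$'s in $w$) and using the independence of the two components, we obtain $L^{=1}_{\CoBuchi(F')}(\A') \ne \emptyset$ iff $\A$ has value~$1$: for the forward direction, choose $u_i$ with $p_i \ge 1-2^{-i}$, so that $\sum_i(1-p_i) \le 1$; for the converse, summability of $1-p_i$ forces $p_i \to 1$.

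The main subtlety is to rule out trivial witnesses: without the $\C$-component, any $\sharp$-free infinite word would vacuously satisfy the co-Büchi condition on $\{\mathit{fail}\}$ since no failure would ever be recorded, making the target emptiness problem trivially false and breaking the reduction. Packaging ``infinitely many $\sharp$'s'' as a co-Büchi property via the auxiliary automaton~$\C$ closes this loophole. A second point requiring care is the independence of the failure events; this is achieved by designing $\mathit{fail}$ to behave like $q_\init$ on $\Sigma$-letters, so that each new block is simulated from the same initial distribution regardless of prior failures.
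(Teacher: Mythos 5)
Your proof is correct. It does not follow the route of the cited result in \cite{BGB12} --- there, undecidability is obtained by complementing simple probabilistic B\"uchi automata with positive semantics, whose universality problem is shown undecidable --- but instead gives a direct reduction from the value-$1$ problem (Theorem~\ref{thm:value1}). This is essentially the argument the paper gives for the \emph{stronger} Proposition~\ref{prop-simple-cobuchi}: your state $\mathit{fail}$ plays exactly the role of the paper's $q_\init'$ (a behavioural copy of $q_\init$ that records that the preceding block missed $F$), and the Borel--Cantelli analysis of the mutually independent per-block failure events is the same. The one genuine structural difference is how you combine the block-checking component with the ``infinitely many $\sharp$'' gadget $\C$: you take a synchronous product with bad set $\set{(q,r) : q=\mathit{fail} \text{ or } r=p_1}$, whereas the paper branches randomly at the start, sending probability $\frac{1}{2}$ to each component. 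Both are sound here, since $P(A\cap B)=1$ iff $P(A)=P(B)=1$, and likewise $\frac{1}{2}P(A)+\frac{1}{2}P(B)=1$ iff both equal $1$; but the product destroys simplicity (it produces transitions whose distributions have four targets of probability $\frac{1}{4}$ each), so your construction proves Proposition~\ref{prop-cobuchi} yet would not suffice for Proposition~\ref{prop-simple-cobuchi}, which is the version the undecidability chain of Theorem~\ref{thm:undecidability-univcoB} actually requires. Two minor points: the correctness analysis should not restrict to non-empty blocks $u_i$ (an accepted word may contain consecutive $\sharp$'s), though empty blocks simply have $p_i=0$ and are absorbed by the summability argument; and the assumption $q_\init\notin F$ is better justified by a trivial case split (if $q_\init\in F$ the empty word already witnesses value $1$) than by adding a fresh initial state, which in degenerate cases can change the answer to the value-$1$ problem.
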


The proof in~\cite{BGB12} is obtained by reducing the universality problem for simple probabilistic \emph{Büchi} word automata with the \emph{positive semantics}: Indeed, automata in this class (we refer to~\cite{BGB12} for definitions) can be effectively complemented into probabilistic co-Büchi word automata with the almost-sure semantics, and whose universality problem is proved to be undecidable. As the complementation procedure does not preserve the property of being simple, we will {later argue (see Proposition~\ref{prop-simple-cobuchi})} that Proposition~\ref{prop-cobuchi} still holds for simple probabilistic co-Büchi word automata with almost-sure semantics.

\subsection{Universal Automata on {Infinite Trees}  with Qualitative Semantics}
\label{sec:univ-tree-automata}

The qualitative semantics for tree automata was introduced by Carayol, Haddad, and Serre in~\cite{CHS14} and was studied 
for non-deterministic~\cite{CHS14}, alternating~\cite{fijalkow2013emptiness}, and probabilistic automata~\cite{CHS14}.

{In this section, we define universal tree automata with qualitative semantics and then extend this concept to alternating tree automata with qualitative semantics in the next section.}

A \concept{tree automaton} is a tuple $\A = (Q, q_\init, \Delta)$, where $Q$ is a finite set of states,
$q_\init$ is the initial state, and $\Delta \subseteq Q \times \Sigma \times Q \times Q$ is the transition relation. 
A \concept{run} of $\A$ over a $\Sigma$-tree $t$ is a $Q$-tree $\rho : \set{0,1}^* \to Q$
such that $\rho(\varepsilon) = q_\init$ and, for all $u \in \set{0,1}^*$, we have $(\rho(u),t(u),\rho(u 0),\rho(u 1)) \in \Delta$.
We let $\Run_\A(t)$ denote the set of runs of $\A$ over $t$.

A tree automaton $\A$ and a run $\rho$ induce a Markov chain $\M_{\A}^\rho$ as follows.
The set of states is $Q \times \set{0,1}^*$, the initial state is $(q_\init,\varepsilon)$, and the probability transition function $T_\A^\rho$ is given by 
$$T_\A^\rho((\rho(u),u)) = \frac{1}{2}  (\rho(u 0),u 0) + \frac{1}{2}  (\rho(u 1),u 1)$$
yielding the probability measure $P_\A^\rho$ {on branches of
the run $\rho$}.

Given a subset of states $F \subseteq Q$, we let $\CoBuchi(F)=(Q\times\{0,1\}^*)^*((Q\setminus F)\times\{0,1\}^*)^\omega$ be the (measurable) set of infinite paths in $\M_{\A}^\rho$ that visit $F$ only finitely often, and we say that the run $\rho$ is \concept{qualitatively accepting} for the co-Büchi condition $F$ if $P_\A^\rho(\CoBuchi(F)) = 1$.
Equivalently, a run $\rho$ is qualitatively accepting for the co-Büchi condition if and only if the set of branches in $\rho$ that contain finitely many nodes labelled by a state in $F$ has measure $1$ for the classical \emph{coin-flipping} measure $\mu$ on branches: $\mu$ is the unique complete probability measure such that $\mu(u\cdot\{0,1\}^\omega)=2^{-|u|}$.

The universal semantics yields the following definition:
\[
L^{\forall}_{\text{Qual},\CoBuchi(F)}(\A) = \set{t \in \Tree(\Sigma) : \forall \rho \in \Run_\A(t), P_\A^\rho(\CoBuchi(F)) = 1}.
\]

In words, a tree $t$ belongs to $L^{\forall}_{\text{Qual},\CoBuchi(F)}(\A)$ if every run of $\A$ over $t$ 
is such that almost all its branches contain finitely many states in $F$. 

The \concept{emptiness problem} for universal co-Büchi tree automata with qualitative semantics is the following decision problem:

\smallskip
\begin{center}\fbox{
\begin{tabular}[t]{rl}
\textbf{INPUT:} & A tree automaton $\A$ and a set $F \subseteq Q$\\
\textbf{QUESTION:} & Is $L^{\forall}_{\text{Qual},\CoBuchi(F)}(\A) = \emptyset$?
\end{tabular}
}\end{center}
\smallskip

We will prove in Theorem~\ref{thm:undecidability-univcoB} that this problem is undecidable.

\subsection{Alternating Automata on Infinite Trees with Qualitative Semantics}
\label{sec:alt-tree-automata}

An \concept{alternating tree automaton} is a tuple $\A = (Q, q_\init,Q_E,Q_A,\Delta)$, where $Q$ is the finite set of states, $q_\init$ is the initial state, $(Q_E,Q_A)$ is a partition of $Q$ into \Eloise's and \Abelard's states and $\Delta \subseteq Q \times \Sigma \times Q \times Q$ is the transition relation. 

The input of such an automaton is a $\Sigma$-tree $\tree$ and
acceptance is defined {by means of} the following two-player perfect-information stochastic game $\game_{\A,t}^{=1}$.
Intuitively, a play in this game consists in moving a pebble along a branch of
$t$ {starting from the root}: the pebble is attached to a state and in a
node $u$ with state $q$, \Eloise (if $q \in Q_E$) or \Abelard (if
$q \in Q_A$) picks a transition $(q,t(u),q_0,q_1) \in \Delta$, and then \Random chooses to move down the pebble either to {node} $u0$ 
(and {then updates} the state to $q_0$) or to {node} $u1$ 
(and {then updates} the state to $q_1$). 

Formally, let $G = (V_E \cup V_A \cup V_R,E)$ 
with $V_E = Q_E\times\{0,1\}^*$, $V_A =Q_A\times \{0,1\}^*$ 
and $V_R = \{(q,u,q_0,q_1) \mid u \in \{0,1\}^* \text{ and } (q,t(u),q_0,q_1) \in \Delta\}$, and
$$\begin{array}{ll}
E \qquad = \qquad & \{((q,u),(q,u,q_0,q_1)) \mid (q,u,q_0,q_1) \in V_R\} \quad \cup \\ 
&\{((q,u,q_0,q_1),(q_x,u \cdot x)) \mid x \in \{0,1\} \text{ and } (q,u,q_0,q_1) \in V_R\}\ 
\end{array}$$

Then, we define $\arena_{\A,t}^{=1} = (G,\VE,\VA,\VR,\delta,(q_\init,\epsilon))$ where $\delta((q,u,q_0,q_1))=\frac{1}{2}(q_0,u0) + \frac{1}{2}(q_1,u1)$.

Given a subset of states $F\subseteq Q$, we say that $t$ is qualitatively accepted by $\A$ for the Büchi (\emph{resp.} co-Büchi) condition $F$ if \Eloise has an almost-surely winning strategy in the Büchi (\emph{resp.} co-Büchi) game $\game_{\A,t}^{=1}=(\arena_{\A,t}^{=1},F\times\{0,1\}^*)$. 

For an alternating tree automaton $\A$ and a subset of states $F$, we denote by $L^{\text{Alt}}_{\text{Qual},\Buchi(F)}(\A)$ (\emph{resp.} $L^{\text{Alt}}_{\text{Qual},\CoBuchi(F)}(\A)$) the set of trees qualitatively accepted by $\A$ for the Büchi (\emph{resp.} co-Büchi) condition $F$. 

\begin{remark}\label{rk:positionalStrategies}
Any \emph{positional} strategy for \Eloise in $\game_{\A,t}^{=1}$ 
can be described as a function $\sigma: Q_E\times \{0,1\}^* \rightarrow  Q \times Q$ 
that satisfies the following property: $\forall u \in \{0,1\}^*$, 
if $\sigma(q,u) = (q_0,q_1)$ then $(q,t(u),q_0,q_1) \in \Delta$. 
Equivalently, in a {curried} form, $\sigma$ is a map $\{0,1\}^* \rightarrow (Q_E \rightarrow Q \times Q)$. 
Hence, if one lets $\mathcal{T}$ be the set of functions from $Q_E$ into $Q \times Q$, 
\Eloise's positional strategies are in bijection with $\mathcal{T}$-labelled binary trees.
\end{remark}

It is easily seen that universal tree automata with qualitative
semantics are subsumed by alternating tree automata with qualitative
semantics. Indeed we have the following classical result (that we
state here only for {co-Büchi} acceptance condition but that
works similarly for any other acceptance condition).

\begin{proposition}\label{proposition:altvsuniv}
	Let $\A = (Q, q_\init, \Delta)$ be a tree automaton and let $F\subseteq Q$. Consider the alternating tree automaton $\B = (Q, q_\init,\emptyset,Q,\Delta)$, meaning that all states of $\A$ are interpreted as \Abelard's. Then the following holds.
	$$L^{\forall}_{\text{Qual},\CoBuchi(F)}(\A) = L^{\text{Alt}}_{\text{Qual},\CoBuchi(F)}(\B) $$
\end{proposition}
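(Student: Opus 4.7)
The plan is to exploit the fact that in $\B$ all states belong to \Abelard, so $V_\Ei = \emptyset$ in $\game_{\B,t}^{=1}$ and \Eloise has a unique (vacuous) strategy $\strat_\Ei$. Under this observation, the condition ``$\strat_\Ei$ is almost-surely winning'' reduces to: for \emph{every} strategy $\strat_\Ai$ of \Abelard one has $P_{\strat_\Ei,\strat_\Ai}(F\times\{0,1\}^*) = 1$ (with the Büchi/co-Büchi set read in the arena). So the whole statement becomes an equality between the set of trees on which every strategy of \Abelard wins almost surely in $\game_{\B,t}^{=1}$, and the set of trees on which every run of $\A$ is qualitatively accepting.

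The key step is a bijection between strategies of \Abelard and runs of $\A$ on $t$. The point is that, along any single play of $\game_{\B,t}^{=1}$, the node component $u\in\{0,1\}^*$ of the current vertex records exactly the sequence of \Random choices so far, while the state component is forced by the transitions selected at ancestors. Hence, given any strategy $\strat_\Ai$ (possibly history-dependent), the history leading to a reachable Abélard vertex $(q,u)$ is uniquely determined by $\strat_\Ai$ itself, so $\strat_\Ai$ can be viewed as a positional function $(q,u)\mapsto(q,t(u),q_0,q_1)\in\Delta$. From such a positional description one builds a run $\rho_{\strat_\Ai}$ of $\A$ on $t$ by setting $\rho_{\strat_\Ai}(\varepsilon)=q_\init$ and, inductively, $\rho_{\strat_\Ai}(ux)=q_x$ when $\strat_\Ai$ picks $(q,t(u),q_0,q_1)$ at $(q,u)$. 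Conversely, every run $\rho$ of $\A$ on $t$ defines a (positional) \Abelard strategy $\strat_\rho$ choosing $(\rho(u),t(u),\rho(u0),\rho(u1))$ at $(\rho(u),u)$, and the two constructions are mutually inverse.

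The second step is to check that the probability space on plays and the probability space on branches of the run agree. The Markov chain $\M^{\arena_{\B,t}^{=1}}_{\strat_\Ei,\strat_\Ai}$, quotiented by the redundant information in histories, is isomorphic to $\M_{\A}^{\rho_{\strat_\Ai}}$: in both cases each step goes from a state at a node $u$ with label $\rho_{\strat_\Ai}(u)$ to the corresponding state at $u0$ or $u1$ with probability $\tfrac12$. Under this isomorphism the cylinder $F\times\{0,1\}^*$ on plays is exactly the co-Büchi event $\CoBuchi(F)$ on branches of $\rho_{\strat_\Ai}$, so $P_{\strat_\Ei,\strat_\Ai}(F\times\{0,1\}^*)=1$ iff $P_\A^{\rho_{\strat_\Ai}}(\CoBuchi(F))=1$.

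Combining the two steps gives both inclusions. If $t\in L^{\forall}_{\textup{Qual},\CoBuchi(F)}(\A)$ then every run is qualitatively accepting, so via the bijection every strategy of \Abelard loses almost surely, making the trivial \Eloise strategy almost-surely winning; thus $t\in L^{\textup{Alt}}_{\textup{Qual},\CoBuchi(F)}(\B)$. The converse is symmetric: any run $\rho$ yields an \Abelard strategy $\strat_\rho$, and almost-sure winning against $\strat_\rho$ is exactly $P_\A^\rho(\CoBuchi(F))=1$. The only real bookkeeping to watch out for is confirming that history-dependent strategies of \Abelard give no extra power, which is what the uniqueness of histories along single plays delivers; once that is in place, the rest is a direct translation between the two formalisms.
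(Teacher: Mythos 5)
Your proposal is correct and follows essentially the same route as the paper's proof: a bijection between runs of $\A$ on $t$ and \Abelard's strategies in $\game_{\B,t}^{=1}$, together with the observation that the induced Markov chains (and hence the co-Büchi events) coincide. You supply more detail than the paper does — in particular the argument that history-dependent strategies of \Abelard collapse to positional ones because each reachable vertex has a unique consistent history — which is a point the paper glosses over but which your treatment correctly justifies.
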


\begin{proof} 
  For a fixed tree $t$, runs of $\A$ over $t$ are in bijection with
  strategies of \Abelard in the co-Büchi game $\game_{\B,t}^{=1}$
  (where \Eloise is making no choice), and moreover a run is
  qualitatively accepting for $\A$ if and only if \Eloise
  almost-surely wins in $\game_{\B,t}^{=1}$ when \Abelard uses the
  corresponding strategy. Hence, all runs of $\A$ over $t$ are
  qualitatively accepting if and only if \Eloise almost-surely wins
  against {every} strategy of \Abelard in $\game_{\B,t}^{=1}$, which
 means that
  $L^{\forall}_{\text{Qual},\CoBuchi(F)}(\A) =
  L^{\text{Alt}}_{\text{Qual},\CoBuchi(F)}(\B)$.
\end{proof}

The \concept{emptiness problem} for alternating Büchi  tree automata with qualitative semantics is the following decision problem:

\smallskip
\begin{center}\fbox{
\begin{tabular}[t]{rl}
\textbf{INPUT:} & An {alternating} tree automaton $\A$ and a set $F \subseteq Q$\\
\textbf{QUESTION:} & Is $L^{\text{Alt}}_{\text{Qual},\Buchi(F)}(\A) = \emptyset$?
\end{tabular}
}\end{center}
\smallskip

We will prove in Theorem~\ref{thm:decidability-altBuchi} that this problem is decidable in exponential time.

\begin{remark}
	The emptiness problem can be {similarly} defined for alternating co-Büchi tree automata with qualitative semantics. However, {this problem is undecidable as a corollary of  Proposition~\ref{proposition:altvsuniv} together with the forthcoming Theorem~\ref{thm:undecidability-univcoB},  proving the undecidability of the emptiness problem for universal co-Büchi tree automata with qualitative semantics.}
\end{remark}

\subsection{Probabilistic Automata on Infinite Trees with Qualitative Semantics}
\label{def:prob_tree_automata}

Probabilistic tree automata with qualitative semantics were defined
in~\cite{CHS14} {with the intention of lifting the
definition of probabilistic automata on infinite words to the case of infinite
trees}. In particular, an input tree induces a probability distribution
over runs and acceptance is defined by requiring that almost all runs
should be accepting. {Mixed} with the qualitative {co-Büchi} semantics, this means
that a tree is accepted if almost all runs have almost all their
branches containing finitely many states from $F$. {Contrary to the
authors of \cite{CHS14} who define a probability measure on runs, we follow another approach (still yielding an equivalent notion
\cite[Proposition~45]{CHS14}) based on Markov chains.}

A \concept{probabilistic tree automaton} is a tuple $\A = (Q, q_\init, \delta)$, where $Q$ is the finite set of states,
$q_\init$ is the initial state, and $\delta : Q \times \Sigma \to \D(Q \times Q)$ is the transition function.

A probabilistic tree automaton $\A$ and a tree $t$ induce a Markov chain $\M_{\A}^t$ as follows.
The set of states is $Q \times \set{0,1}^*$, the initial state is $(q_\init,\varepsilon)$, and the probability transition function $T_\A^t$ is given by (where $\cdot$ distributes over $+$)
{
\[
T_\A^t((q,u)) = \sum_{q_0,q_1 \in Q} \delta(q,t(u))(q_0,q_1) \cdot \left(
  \frac{1}{2} {\cdot} (q_0, u 0) + \frac{1}{2} {\cdot} (q_1, u 1) \right),
\]
}

Given a subset of states $F \subseteq Q$, we again let $\CoBuchi(F)=(Q\times\{0,1\}^*)^*((Q\setminus F)\times\{0,1\}^*)^\omega$ be the (measurable) set of infinite paths in $\M_{\A}^t$ that visit $F$ only finitely often. Then the probability measure $P_\A^t$ induced by $\M_{\A}^t$ yields the following definition of the set of trees \concept{almost-surely qualitatively accepted} by $\A$: 
\[
L^{\forall^{=1}}_{\text{Qual},\CoBuchi(F)}(\A) = \set{t \in \Tree(\Sigma) : P_\A^t(\CoBuchi(F)) = 1}.
\]

{We now turn to our main decidability result about emptiness of alternating B{\"u}chi tree automata with qualitative semantics.}

\section{Decidability of the Emptiness Problem for Alternating Büchi Tree Automata with Qualitative Semantics}
\label{sec-dec-Buchi}
In this section, we prove Theorem~\ref{thm:decidability-altBuchi} that states the  decidability of the emptiness problem for alternating 
Büchi tree automata with qualitative semantics, which contrasts with the forthcoming result that the emptiness problem for universal co-Büchi tree automata with qualitative semantics is undecidable (Theorem~\ref{thm:undecidability-univcoB} of  Section~\ref{sec-undec-automata}).

Our approach for checking emptiness of an alternating Büchi tree automaton
$\A$ with qualitative semantics relies on a two-player
\emph{imperfect-information} stochastic \emph{finite} Büchi game. In this game, \Eloise almost-surely wins if, and only if, the language accepted
by $\A$ is non-empty. As for this class of games, one can decide 
whether \Eloise has an almost-surely winning strategy, the
announced decidability result follows. 

We establish in
Section~\ref{section:positionality} a preliminary general positionality result to be used in Section~\ref{sec:emptiness-checking} for proving the
equivalence between {\Eloise} almost-surely winning in the game and 
 {$\A$ accepting} some tree.
 
\subsection{A Positionality Result for Chronological Games}
\label{section:positionality}

For the rest of this section, we fix a {stochastic} arena
$\arena=(G,V_E,V_A,V_R,\delta,v_\init)$ with $G=(V,E)$. Moreover, we assume that the game is \emph{chronological} in the sense that there exists a function
$\rank: V_E\cup V_A\cup V_R \rightarrow \N$ such that
$\rank^{-1}(0)=\{v_\init\}$ and for $(v,v') \in E$,
$\rank(v') = \rank(v) + 1$.  Note that the
arena $\arena_{\A,t}^{=1}$ used in
Section~\ref{sec:alt-tree-automata} to define acceptance of a tree $t$ by
an alternating tree automaton with qualitative semantics $\A$ is chronological. Note also that a chronological arena with finite out-degree has a countable set of vertices.

\begin{theorem}\label{theo:positionalityChrono}
In a two-player perfect-information stochastic Büchi game played on a chronological arena with finite out-degree, \Eloise has an almost-surely winning strategy if, and only if, she has a positional almost-surely winning strategy.
\end{theorem}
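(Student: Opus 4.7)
The \emph{if} direction is trivial. For the converse, I denote by $W \subseteq V$ the set of vertices from which \Eloise has some (possibly memory-using) almost-surely winning strategy, and the plan is to build a positional strategy $\sigma^*$ that is almost-surely winning from every $v \in W$; since $v_\init \in W$ this proves the theorem.

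A preliminary step establishes the standard closure properties of $W$: every $v \in V_E \cap W$ has a successor in $W$; every successor of $v \in V_A \cap W$ lies in $W$; and for $v \in V_R \cap W$, the support of $\delta(v)$ is included in $W$. The last two follow by a residual argument: since the Büchi condition is prefix-independent, the restriction of any almost-surely winning strategy to the suffix after a positive-probability one-step history is again almost-surely winning, so the resulting vertex inherits an almost-surely winning strategy. From these closure properties I derive the greatest-fixpoint characterization $W = \{v : \Eloise \text{ can almost-surely reach } F \cap W \text{ from } v \text{ while staying in } W\}$, reducing the Büchi problem inside $W$ to an iterated almost-sure reachability problem targeting $F \cap W$.

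The crux is then a lemma stating that, in a chronological arena with finite out-degree, almost-sure reachability of a target set admits a \emph{positional} winning strategy. This differs both from the finite-arena case (Theorem~\ref{theo:posDetSt}) and from general infinite arenas, where positional strategies are known to be insufficient by \cite[Proposition~5.7]{Kucera_foxes} (as recalled in the remark after Theorem~\ref{theo:posDetSt}); hence the proof must genuinely use the absence of cycles. I would proceed by a transfinite rank construction on the almost-sure reachability region: assign to each vertex a countable ordinal measuring ``combined sure-plus-stochastic progress'' toward the target. Finite out-degree ensures that the required minima are attained at actual successors, while chronologicality guarantees that the ordinal-valued rank is well-founded — every infinite play strictly descends in rank and hence cannot defer hitting the target indefinitely. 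At an \Eloise vertex the positional strategy selects any successor of strictly smaller rank; at a stochastic vertex, a Borel--Cantelli-style argument is needed to turn a positive per-step probability of rank decrease into an almost-sure eventual hit, since the naïve ``strictly smaller than every successor'' recursion would only capture sure reachability, a strictly smaller region in general.

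Applying this lemma with target $F \cap W$ produces the desired positional strategy $\sigma^*$: from every $v \in W$, the play under $\sigma^*$ almost-surely reaches $F \cap W$. By the Markov property of the chain induced by $\sigma^*$ against any strategy of \Abelard, and because $\sigma^*$ is positional, the same argument restarts from the freshly reached vertex in $F \cap W$; iterating shows that $F$ is visited infinitely often almost surely along almost every play of $\sigma^*$, so $\sigma^*$ is almost-surely winning. The hard part is the technical content of the reachability lemma — in particular, correctly defining the rank at stochastic vertices so that chronologicality is fully exploited; once that is in place, closure of $W$ and the fixpoint characterization combine with it by standard stochastic-game arguments.
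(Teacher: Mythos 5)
Your overall architecture is sound and close in spirit to the paper's: both reduce almost-sure Büchi to almost-sure reachability (you via the winning region $W$, its closure properties and fixpoint characterisation; the paper via a gadget that routes each Büchi vertex through a fresh random vertex leading to a new target with probability $\frac{1}{2}$), and both then concentrate the entire difficulty in a positionality lemma for almost-sure reachability on chronological arenas of finite out-degree. Your reduction step and the final iteration argument are acceptable alternatives to the paper's.

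The gap is in your proof of the reachability lemma itself. Your transfinite-rank construction invokes chronologicality only to guarantee well-foundedness of the descent, but well-foundedness addresses \emph{sure} descent; as you yourself note, at stochastic vertices you can only guarantee a \emph{positive} probability of rank decrease, and you propose to upgrade this to almost-sure reachability by ``a Borel--Cantelli-style argument''. This is precisely where the argument breaks: the relevant direction of Borel--Cantelli requires the per-step progress probabilities to have a divergent sum, and on an infinite arena these probabilities need not be bounded away from zero --- neither the minimal positive transition probability nor the number of steps needed to make progress is uniformly bounded across the infinitely many vertices. This is exactly the phenomenon behind the counterexample of \cite[Proposition~5.7]{Kucera_foxes} recalled after Theorem~\ref{theo:posDetSt}, so any correct proof must extract a \emph{uniform} quantitative bound, and your sketch contains no mechanism for doing so. The paper supplies exactly this missing ingredient: Lemma~\ref{lemma:seuil} shows, by a diagonal extraction of a limit positional counter-strategy for \Abelard (using finite out-degree and the countability of a chronological arena), that from an almost-surely winning vertex \Eloise can reach $F$ with probability at least $\frac{1}{2}$ within some fixed number $k$ of steps, uniformly over \Abelard's strategies. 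Chronologicality is then exploited not for well-foundedness but to slice the arena into finite bands of ranks $[k_i,k_{i+1}[$, on each of which an optimal \emph{positional} strategy for a finite sub-arena achieves probability at least $\frac{1}{2}$; concatenating these positional strategies yields probability at least $1-2^{-p}$ of reaching $F$ after $p$ bands, hence probability $1$. Without Lemma~\ref{lemma:seuil} or an equivalent uniformisation step, your rank-based argument does not go through.
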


Actually, the {core difficulty lies in proving Theorem~\ref{theo:positionalityChrono} for the simple case of reachability games.}

\begin{theorem}\label{theo:positionalityChronoReach}
{In a two-player perfect-information stochastic reachability game played on a chronological arena with finite out-degree, \Eloise has an almost-surely winning strategy if, and only if, she has a positional almost-surely winning strategy.}
\end{theorem}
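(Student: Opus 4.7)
The ``if'' direction is immediate. For ``only if'', introduce the almost-sure winning region
\[W := \{v \in V : \text{\Eloise has an almost-surely winning strategy from } v\}.\]
I first give a Bellman-style analysis of $W$ exploiting finite out-degree: $F \subseteq W$; if $v \in \VE \cap W$ then some $v' \in E(v)$ lies in $W$ (decomposing an almost-surely winning strategy by its finitely many possible first moves, at least one of which must have an almost-surely winning residual); if $v \in \VA \cap W$ then $E(v) \subseteq W$ (else \Abelard escapes $W$); and if $v \in \VR \cap W$ then $\Supp(\delta(v)) \subseteq W$. Hence any positional strategy $\sigma^*$ that selects some $f(v) \in E(v) \cap W$ at each $v \in \VE \cap W$ makes all plays from $W$ almost surely stay in $W$, for every strategy $\tau$ of \Abelard.

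I then plan to build such a $\sigma^*$ as a limit of positional optimal strategies on finite truncations. For each $N$, the finite sub-arena $\arena_N$ of vertices of rank at most $N$ (with rank-$N$ vertices declared losing sinks) yields, by Theorem~\ref{theo:posDetSt}, a positional optimal strategy $\sigma_N$ achieving the bounded-horizon value $\val^{(N)}(\vini)$. Identifying positional strategies on $\arena$ with the compact product $\prod_{v \in \VE} E(v)$ (each $E(v)$ finite) in the product topology, I extract a convergent subsequence $\sigma_{N_k} \to \sigma^*$ by diagonalisation: each $\sigma^*(v)$ equals $\sigma_{N_k}(v)$ for $k$ sufficiently large (depending on $v$).

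A key intermediate result is the convergence $\val^{(N)}(\vini) \to 1$ as $N \to \infty$. Fixing any almost-surely winning strategy $\sigma_0$ (which exists since $\vini \in W$), each map $\tau \mapsto P^{\sigma_0,\tau}_{\vini}(\text{reach } F \text{ in } \leq N \text{ steps})$ is continuous on the compact product space of \Abelard's strategies (it depends on only finitely many coordinates of $\tau$), the sequence is increasing in $N$, and it converges pointwise to the constant $1$ since $\sigma_0$ is almost-surely winning against every $\tau$. By Dini's theorem the convergence is uniform, so $\inf_\tau P^{\sigma_0,\tau}_{\vini}(\text{reach within } N) \to 1$, whence $\val^{(N)}(\vini) \to 1$.

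The main obstacle is then to transfer the $N_k$-optimality of $\sigma_{N_k}$ to almost-sure reachability under the limit $\sigma^*$: although agreement on bounded-rank initial segments propagates finite-horizon probabilities from $\sigma_{N_k}$ to $\sigma^*$, it does not directly give an unbounded-horizon guarantee, because an $N_k$-optimal strategy may be ``lazy'' about reaching $F$ at earlier times. The chronological structure is essential here (see the remark preceding Theorem~\ref{theo:positionalityChrono} and~\cite{Kucera_foxes}), and closing the gap requires refining the construction so that each $\sigma_N$ is chosen not merely as an $N$-optimal strategy but also to respect the Bellman closure of $W$; combined with the convergence $\val^{(N_k)}(\vini) \to 1$, this yields $\inf_\tau P^{\sigma^*,\tau}_{\vini}(\text{reach } F) = 1$.
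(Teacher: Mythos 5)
Your ``if'' direction is fine, and your Dini-type argument that $\inf_\tau P^{\sigma_0,\tau}_{\vini}(\text{reach } F \text{ within } N)\to 1$ is correct: the space of \Abelard strategies is a product of finite sets (finite out-degree), each finite-horizon probability depends on finitely many coordinates, and Dini upgrades pointwise to uniform convergence. This is in fact a clean alternative to the paper's key Lemma~\ref{lemma:seuil}, which obtains the same kind of uniform finite-horizon progress bound (probability at least $\tfrac12$ of reaching $F$ within some fixed number of steps, against \emph{all} \Abelard strategies) by a diagonal extraction over positional \Abelard strategies rather than by compactness plus Dini. Your Bellman-closure observations about $W$ are also true as stated.

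The gap is in the assembly, and you have correctly located it yourself without closing it. Taking a product-topology limit $\sigma^*$ of $N$-optimal positional strategies does not yield an almost-surely winning strategy, and the patch you propose --- additionally requiring each $\sigma_N$ to ``respect the Bellman closure of $W$'' --- cannot work, because for reachability the almost-sure winning region is \emph{not} characterized by its one-step closure properties: a play can remain in $W$ forever without ever reaching $F$. For instance, in a chronological arena where at each rank \Eloise may either ``gamble'' (reach $F$ with probability $\tfrac12$, otherwise continue) or ``defer'' (continue without gambling), every vertex lies in $W$ and every move respects the Bellman closure, yet the positional strategy that always defers reaches $F$ with probability $0$; and since an $N$-optimal strategy may postpone its gambles towards the horizon $N$, the diagonal limit of $N_k$-optimal strategies can be exactly this deferring strategy. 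So the last sentence of your proposal asserts the conclusion rather than proving it. What actually closes the argument in the paper is a different use of the uniform progress bound: one \emph{slices} the arena into consecutive finite sub-arenas $\game_{[k_i,k_{i+1}]}$, chooses on each slice a positional strategy that is optimal for reaching $F$ \emph{before leaving the slice} (guaranteeing probability at least $\tfrac12$ by the uniform bound applied to the finitely many entry vertices of rank $k_i$), and takes the disjoint union of these slice strategies. Positionality is preserved because the slices are disjoint, and the reach probability after $p$ slices is at least $1-2^{-p}$. You have all the ingredients for this (your Dini estimate substitutes for Lemma~\ref{lemma:seuil}), but the limit-of-strategies route you chose instead does not go through.
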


\begin{proof}
{The direction from right to left is immediate}. For the other direction, the key steps are the following. First, we establish (Lemma~\ref{lemma:seuil}) that if \Eloise can ensure to reach
  $F$ with probability~$1$ from some initial vertex, then there exists
  a bound $k$ such that she can ensure to reach $F$ with probability
  at least half within $k$ steps.  {Second, we exploit Lemma~\ref{lemma:seuil}} to “slice” the arena into infinitely many disjoint
  finite arenas: in each slice \Eloise plays to reach $F$ with
  probability at least half. Since each slice forms a finite
  sub-arena, optimal \emph{positional} strategies always
  exist. Finally, the strategy that plays in turns the
  latter positional strategies ensures to almost-surely reach $F$ in the long run.

{Let $\game=(\arena,F)$ be a two-player perfect-information stochastic reachability game played on a chronological arena with finite out-degree. In the following, a strategy in $\game$ from a vertex $v$ is a strategy in the game obtained from $\game$ by changing the initial vertex of the arena $\arena$ to $v$.}

The following lemma allows us to decompose the infinite arena $\arena$ into infinitely many finite arenas.

\begin{lemma}\label{lemma:seuil}
Let $\sigma_\Ei$ be an almost-surely winning strategy for \Eloise in
$\game$ from some vertex $v$. 
Then, there exists an integer $k$ such that for any  strategy $\sigma_\Ai$ of \Abelard,
we have
$$\proba_{\sigma_\Ei,\sigma_\Ai}(V^{\leq k}FV^\omega)\geq \frac{1}{2}\ .$$
\end{lemma}

\begin{proof}[Proof of Lemma~\ref{lemma:seuil}]
  Toward a contradiction, assume that such a $k$ does not exist.
  Hence, for each $k$ there exists a strategy $\sigma_{\Ai,k}$ such
  that
  $\proba_{\sigma_\Ei,\sigma_{\Ai,k}}(V^{\leq k}FV^\omega)<
  \frac{1}{2}$.  
  
  {Without loss of generality, we can assume that 
  $\sigma_{\Ai,k}$ is positional. Indeed, one can pick for
  $\sigma_{\Ai,k}$ a strategy for \Abelard that minimises the probability of winning for \Eloise in the
  reachabililty game obtained by restricting $\game$ to vertices of rank at most $k$. This game has a finite arena since $\game$ has finite out-degree, and by e.g.\ \cite{GZ07} such a strategy for \Abelard can be chosen positional.}

  From the sequence of strategies 
  $(\sigma_{\Ai,k})_{k \geq 0}$, we {now extract} a strategy
  $\sigma_{\Ai,\infty}$ {(designed to contradict the assumption that \Eloise has an almost-surely winning strategy)} that for every $k \geq 0$, {agrees} with
  infinitely many $\sigma_{\Ai,h}$ on its first $k$ moves.  {Since $\game$ has countably many vertices, fix} an (arbitrary) enumeration $v_1,v_2,\cdots$ of the vertices in
  $V$.

  {We define $\sigma_{\Ai,\infty}$ step-wise inductively on $i$: at step $i$, $\sigma_{\Ai,\infty}$ is defined on   $v_1,\cdots,v_i$ and on these vertices agrees with all those strategies $\sigma_{\Ai,h}$ with $h \in I_i$ where the sequence
  $I_0 \supseteq I_1 \supseteq I_2 \supseteq I_3 \supseteq \cdots$ is
  also defined inductively on $i$ and is such that each $I_i$ is
  infinite.}
  
  We let $I_0 = \N$ be the set of all positive integers.  
  
  {For $I_i$ where ${i\geq 1}$, consider the values of
  $\sigma_{{\Ai},h}(v_i)$ for all $h \in I_{i-1}$. Because  $G$
  has finite out-degree, there is some $v$ such that  $\sigma_{\Ai,h}(v_i) = v$, for infinitely many
  $h \in I_{i-1}$.}  We define
  $\sigma_{\Ai,\infty}(v_i) = v$ and we let
  $I_i = \set{h \in I_{i-1} \mid \sigma_{\Ai,h}(v_i) = v}${; note that $I_i$ is infinite.}

{Now, for $k \geq 0$, it is easy to see that by choosing $i$ big enough so that all vertices of rank at most $k$ belong to $\set{v_1,\ldots,v_i}$, strategy  $\sigma_{\Ai,\infty}$ agrees on its $k$ first moves with the infinitely many $\sigma_{\Ai,h}$ where $h \in I_i$.}

{As a consequence, for every $k$} there is some $h\geq k$ such that 
$$\proba_{\sigma_\Ei,\sigma_{\Ai,\infty}}(V^{\leq k}FV^\omega) = 
\proba_{\sigma_\Ei,\sigma_{\Ai,h}}(V^{\leq k}FV^\omega) \leq 
\proba_{\sigma_\Ei,\sigma_{\Ai,h}}(V^{\leq h}FV^\omega)<\frac{1}{2}$$ 
As $V^*FV^\omega = \bigcup_{k\geq 0}V^{\leq k}FV^\omega$ and as the sequence $(V^{\leq k}FV^\omega)_{k\geq 0}$ 
is increasing for set inclusion, one concludes that 
$$\proba_{\sigma_\Ei,\sigma_{\Ai,\infty}}(V^*FV^\omega) = 
\lim_{k\rightarrow \infty} \proba_{\sigma_\Ei,\sigma_{\Ai,\infty}}(V^{\leq k}FV^\omega) \leq \frac{1}{2}<1$$
which leads to a contradiction with $\sigma_\Ei$ being almost-surely winning, {and concludes the proof of Lemma~\ref{lemma:seuil}}.
\end{proof}

{Keeping on with the proof of Theorem~\ref{theo:positionalityChronoReach},} assume that \Eloise has an almost-surely wining strategy $\sigma_\Ei$ in $\game$.
Without loss of generality, we can assume that she has an
almost-surely {winning} strategy from everywhere,
by restricting the arena to vertices reachable by an almost-surely winning strategy.

For $k < k'$, we define the reachability game $\game_{[k,k']}$  induced by restricting the {arena $\arena= (G,\VE,\VA,\VR,\delta,\vini)$} to vertices of rank in $[k,k']${ where we add self-loops on vertices of rank $k'$ to avoid having dead-end vertices}.
Since $G$ has finite out-degree, there are finitely many vertices of rank in $[k,k']$, hence  $\game_{[k,k']}$ is finite.

We define inductively an increasing sequence of ranks $(k_i)_{i \ge 1}$ together 
with a sequence of strategies $(\sigma_{\Ei,[k_i,k_{i+1}[})_{i \ge 1}$
 such that for all $i \ge 1$, 
$\sigma_{\Ei,[k_i,k_{i+1}[}$ is a positional strategy, defined on all
vertices of rank in $[k_i,k_{i+1}[$, and such that 
from all vertices of rank $k_i$, for all strategies $\sigma_\Ai$, we have
$$\proba_{\sigma_{\Ei,[k_i,k_{i+1}[},\sigma_\Ai}(V^{\leq \ell}FV^\omega)\geq \frac{1}{2}\ ,$$	
where $\ell = k_{i+1} - k_i$.

Assume the first $i$ ranks and strategies are defined.
For each vertex of rank $k_i$, Lemma~\ref{lemma:seuil} gives
the existence of {some bound $k$}; since there are finitely many such vertices,
we can consider the maximum of those bounds that we call
{$\ell$, and we let $k_{i+1}=k_i+\ell$}.
By construction and {Lemma~\ref{lemma:seuil}}, from all vertices of rank $k_i$, for all strategies $\sigma_\Ai$, we have
$$\proba_{\sigma_\Ei,\sigma_\Ai}(V^{\leq \ell}FV^\omega)\geq \frac{1}{2}\ ,$$	
where $\ell = k_{i+1} - k_i$.
In other words, \Eloise wins the reachability game $\game_{[k_i,k_{i+1}]}$
with probability at least half,
so, relying on a generalisation\footnote{More precisely, when playing a reachability game on a finite arena, \Eloise always has an optimal positional strategy, where $\sigma_\Ei$ being optimal means that $\inf_{\sigma_\Ai} \proba_{\sigma_\Ei,\sigma_\Ai}(V^{\leq \ell}FV^\omega) = \sup_{\sigma'_\Ei}\inf_{\sigma_\Ai} \proba_{\sigma_\Ei,\sigma'_\Ai}(V^{\leq \ell}FV^\omega)$.} of Theorem~\ref{theo:posDetSt} (see
\emph{e.g.} \cite{GZ07,Kucera_foxes}), 
there exists an optimal uniform {(i.e. working from any initial vertex)}  positional strategy,
that we call $\sigma_{\Ei,[k_i,k_{i+1}[}$.
This concludes the inductive construction.

Now, define $\sigma_{\Ei,\infty}$ as the disjoint union of the
strategies $\sigma_{\Ei,[k_i,k_{i+1}[}$.  This is a positional
strategy; we argue that it is almost-surely winning.  Assume, towards
a contradiction, that this is not the case. Then, there exists
$\varepsilon > 0$ and a strategy $\sigma_{\Ai}$ such that
$$\proba_{\sigma_{\Ei,\infty},\sigma_\Ai}(V^{*}FV^\omega) \le 1 - \varepsilon\ .$$	
Observe that playing consistently with the first $p$ strategies $\sigma_{\Ei,[k_i,k_{i+1}[}$
ensures to reach $F$ with probability at least $1 - \frac{1}{2^p}$.
Since playing consistently with $\sigma_{\Ei,\infty}$ implies
playing consistently with the first $p$ strategies $\sigma_{\Ei,[k_i,k_{i+1}[}$,
we reach a contradiction by considering $p$ large enough so that $\frac{1}{2^p} < \varepsilon$.
\end{proof}

Theorem~\ref{theo:positionalityChrono} is an easy consequence of Theorem~\ref{theo:positionalityChronoReach} thanks to a simple and neat reduction from~\cite[Remark~2.3]{DBLP:journals/tocl/Chatterjee014} (also see \cite[Lemma~8.3]{BGB12}).
Roughly speaking, to turn a Büchi game into a reachability game equivalent with respect to almost-sure winning,
one adds a unique final vertex and replaces every Büchi vertex by a fresh random vertex which either reaches the final vertex
or proceeds in the game, each with probability half.
Then, visiting infinitely many Büchi vertices ensures to almost-surely reach the final vertex,
and conversely, reaching almost-surely the final vertex requires to almost-surely visit infinitely many Büchi vertices.

We make all this more formal.

\begin{proof}[Proof of Theorem~\ref{theo:positionalityChrono}]
  Recall that we denote by $\arena=(G,\VE,\VA,\VR,\delta,\vini)$,  with
  $G=(V,E)$, the underlying arena of $\game$ and denote by
  $F\subseteq V$ the set of vertices defining the Büchi condition. We
  now build an arena $\arena'=(G',\VE',\VA',\VR',\delta',\vini')$,  with
  $G'=(V',E')$, and a set $F'\subseteq V'$ of vertices such that
  \Eloise almost-surely wins  in the Büchi
  game $\game=(\arena,F)$ if and only if she almost-surely wins  in the reachability game
  $\game'=(\arena',F')$, {and in addition, if she has a
    positional almost-surely winning strategy in one game, she has one in the other}. This permits to deduce
  Theorem~\ref{theo:positionalityChrono} from
  Theorem~\ref{theo:positionalityChronoReach}.

{We formally explain how to construct $\arena$', taking care that it is chronological.}
    {The set of vertices $V'$ consists of $V$ augmented with a countable set of vertices $\{f_i\mid i\geq 0\}$, and with extra random vertices $F_R=\{v_s\mid s\in F\}$, one per
  {vertex} in $F$. The vertex $f_0$ has a unique outgoing transition to $f_{1}$ and it can be reached only from
  vertices in $F_R$. For every $i\geq 1$, the vertex $f_i$ has a unique outgoing transition to $f_{i+1}$ and it can be reached only from
   $f_{i-1}$. From a vertex $v_s\in F_R$ there are two outgoing
  edges: one to $f_0$ and one to $s$ and both can be chosen with the
  same probability half,
  i.e. $\delta'(v_s)=\frac{1}{2}f_0+\frac{1}{2}s$. Any edge in
  $G$ going from a vertex ${v\in V}$ to a vertex $s\in F$ is replaced by an
  edge from $v$ to $v_s$,  and if $v\in \VR$ we let $\delta'(v)(s)=0$ and
    $\delta'(v)(v_s)=\delta(v)(s)$. All other edges are left
    untouched: for every
  $v\in \VR$ and $s\notin {F \cup F_R \cup \{f_i\mid i\geq 0\}}$, we let
  $\delta'(v)(s)=\delta(v)(s)$.   Finally we let $\VE'=\VE\cup\{f_i\mid i\geq 0\}$, $\VA'=\VA$,
  $\VR'=\VR\cup F_R$ and $F'=\{f_0\}$.}
{Note that $\arena'$ is chronological by construction and $\arena$ being chronological.}

{There is an obvious correspondence between strategies (of
    both \Eloise and \Abelard) in $\game$ and strategies in $\game'$,
    and it preserves positionality.  Moreover, \Eloise almost-surely
    reaches the final state $f_0$ in $\game'$ {with strategy
      $\sigma'_\Ei$} if and only if she almost-surely visits
    infinitely often $F$ in $\game$ with {the corresponding} strategy
    $\sigma_\Ei$. Indeed, if she almost-surely visits $F$ in $\game$
    using $\sigma_\Ei$, due to positive transition probability to $f_0$
    from states in $F$, she almost-surely reaches $f_0$ in $\game'$
    using {$\sigma'_\Ei$}. Conversely, if against any
    strategy $\sigma_\Ei$ of \Eloise in $\game$, \Abelard has a strategy
    $\sigma_\Ai$ that ensures that $F$ is visited finitely often with
    some positive probability $\epsilon>0$, then  in $\game'$, when
    \Eloise and \Abelard use the corresponding pair of
    strategies $(\sigma'_\Ei,\sigma'_\Ai)$,  there is a
    positive probability $\epsilon'$  that  $f_0$ is never reached,  as the only way of reaching $f_0$
    is by going through $F$; hence, in $\game'$, against any 
    strategy $\sigma'_\Ei$ of \Eloise, \Abelard has a strategy $\sigma'_\Ai$ that
    avoids reaching $f_0$ with positive probability.}
\end{proof}

\begin{remark}\label{rk:purevsrandomised}
{As already announced,} in this paper we only considered pure (i.e. non-randomised) strategies. Hence, “\Eloise has an almost-surely winning strategy” should be understood in both Theorem~\ref{theo:positionalityChronoReach} and Theorem~\ref{theo:positionalityChrono} as “\Eloise has an almost-surely winning \emph{pure} strategy”. However, our proof directly carries over to the more general case of randomised strategies.
\end{remark}

\subsection{Checking Emptiness}\label{sec:emptiness-checking}

Fix an alternating tree automaton $\A = (Q, q_\init,Q_E,Q_A,\Delta)$
and a subset $F\subseteq Q$ of final states. 
In order to check whether $L^{\text{Alt}}_{\text{Qual},\Buchi(F)}(\A) = \emptyset$, we design an
imperfect-information stochastic Büchi game  $\game_\A^\emptyset$ in which \Eloise has an almost-surely winning strategy if and only if $L^{\text{Alt}}_{\text{Qual},\Buchi(F)}(\A) \neq \emptyset$. {The equivalence is proved by applying the positionality result established in Theorem~\ref{theo:positionalityChrono} to the acceptance game for $\mathcal{A}$.}

In the game, \Eloise describes both a tree $t$
and a positional strategy {$\sigma_t$} for her in the game
{$\game_{\A,t}^{=1}$}. Following Remark
\ref{rk:positionalStrategies}, the {positional} strategy {$\sigma_t$} is described as a
$\mathcal{T}$-labelled tree, where $\mathcal{T}$ denotes the set of
functions from $Q_E$ into $Q\times Q$. As the plays are of $\omega$-length,
\Eloise actually does not fully describe $t$ and $\sigma_t$ but only a branch:
this branch is chosen by \Random while \Abelard  takes care of computing
the sequence of states along it (either by updating an existential
state according to {$\sigma_t$} or, when the state is universal, by
choosing an arbitrary valid transition of the automaton). In this game, 
\Eloise observes the directions, but not the actual control state of the automaton. 

\begin{remark}The fact that \Eloise does not observe the control state of the automaton is crucial here, as it avoids her to cheat when describing the input tree. Indeed, consider an alternating tree automaton whose initial state belongs to \Abelard and from which there are two possible transitions: one that makes the automaton check that both subtrees only contain nodes labelled by $a$, and one that makes the automaton check  that  both subtrees only contain nodes labelled by $b$. Trivially, no tree is accepted by such an automaton. However, if one plays a modified version of the previous game where \Eloise observes the control state she can surely win in this game by producing a tree with all nodes labeled by $a$ (\emph{resp.} by $b$) depending on the initial choice by \Abelard.\end{remark}

Formally, we let $\arena_{\A}^{\emptyset}=(\states,A,T,\sim,\sini)$ where\begin{itemize}
	\item $\states = (Q\times \{0,1\})\cup \{(q_\init,\epsilon)\}$;
	\item $\sini = (q_\init,\epsilon)$;
	\item $A\subseteq \Sigma\times \mathcal{T}$ is the set 
{$\{(a,\tau) \mid \forall q\in Q_E,
\;(q,a,q_0,q_1)\in \Delta \text{ where }(q_0,q_1)=\tau(q)\}$};
	\item 
  $T=  \{((q,i),(a,\tau),d_{q_0,q_1})\mid q\in Q_E \text{ and }\tau(q)=(q_0,q_1)\} 
  \;\cup$

$ \{((q,i),(a,\tau),d_{q_0,q_1})\mid
  q\in Q_A \text{ and }(q,a,q_0,q_1)\in \Delta\}$     
where $d_{q_0,q_1}=\frac{1}{2}(q_0,0)+\frac{1}{2}(q_1,1)$; and
\item $(q,i)\sim(q',i)$ for all $q,q'\in Q$ and $i\in\{0,1\}$. 
\end{itemize}
Finally we let $\game_\A^{\emptyset} = (\arena_\A^\emptyset,F\times\{0,1\})$.  

The following theorem relates $\game_\A^{\emptyset}$ and $L^{\text{Alt}}_{\text{Qual},\Buchi(F)}(\A)$.

\begin{theorem}\label{theo:main-qual}
\Eloise almost-surely wins in $\game_\A^{\emptyset}$ iff $L^{\text{Alt}}_{\text{Qual},\Buchi(F)}(\A)\neq\emptyset$.
\end{theorem}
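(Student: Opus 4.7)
The plan is to establish a tight correspondence between Éloïse's strategies in the imperfect-information game $\game_\A^{\emptyset}$ and pairs $(t,\sigma_t)$ consisting of a $\Sigma$-tree $t$ together with a positional strategy $\sigma_t$ for her in the acceptance game $\game_{\A,t}^{=1}$, and then derive both implications from this correspondence.

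First I would observe that because Éloïse's observation in $\game_\A^{\emptyset}$ is given by the relation $\sim$ with $(q,i)\sim(q',i)$ for all $q,q'\in Q$, and because strategies enjoy perfect recall, her action at step $k$ depends only on the sequence of directions $i_1\dots i_k$ played by Random. Consequently, a strategy $\strat_\Ei$ for her is exactly encoded by a $(\Sigma\times\mathcal{T})$-labelling of $\{0,1\}^*$; projecting on the two components and invoking Remark~\ref{rk:positionalStrategies} then yields a $\Sigma$-tree $t$ and a positional strategy $\sigma_t$ in $\game_{\A,t}^{=1}$. The definition of the action set $A$ ensures $\sigma_t$ picks only valid transitions of $\A$, and conversely every such pair $(t,\sigma_t)$ assembles into a strategy $\strat_\Ei$ in $\game_\A^{\emptyset}$.

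Next I would match Abélard's strategies: once $\strat_\Ei$ (equivalently $t$ and $\sigma_t$) is fixed, any Abélard strategy $\sigma_\Ai$ in $\game_{\A,t}^{=1}$, which at a universal node picks a transition in $\Delta$, canonically lifts to an Abélard strategy $\strat_\Ai$ in $\game_\A^{\emptyset}$ by applying it to the node reconstructed from the direction history, and conversely every $\strat_\Ai$ restricts to such a $\sigma_\Ai$. Under this translation the plays consistent with $(\strat_\Ei,\strat_\Ai)$ are in measure-preserving bijection with those consistent with $(\sigma_t,\sigma_\Ai)$, since both games toss a fair coin to choose the direction; moreover a play visits $F\times\{0,1\}$ infinitely often in $\game_\A^{\emptyset}$ iff the corresponding play visits $F\times\{0,1\}^*$ infinitely often in $\game_{\A,t}^{=1}$, so the Büchi winning conditions coincide.

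For the $(\Rightarrow)$ direction I would extract $(t,\sigma_t)$ from an almost-surely winning $\strat_\Ei$ and use the correspondence to deduce that $\sigma_t$ almost-surely wins against every $\sigma_\Ai$ in $\game_{\A,t}^{=1}$, giving $t\in L^{\text{Alt}}_{\text{Qual},\Buchi(F)}(\A)$. For the $(\Leftarrow)$ direction, any $t\in L^{\text{Alt}}_{\text{Qual},\Buchi(F)}(\A)$ furnishes an almost-surely winning strategy for Éloïse in $\game_{\A,t}^{=1}$; since the underlying arena is chronological and of finite out-degree, Theorem~\ref{theo:positionalityChrono} upgrades it to a \emph{positional} one $\sigma_t$, and the pair $(t,\sigma_t)$ translates back into an almost-surely winning $\strat_\Ei$ in $\game_\A^{\emptyset}$. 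The main obstacle is precisely this last step: a priori an almost-surely winning strategy in $\game_{\A,t}^{=1}$ could require unbounded memory, in which case it would not be representable as a $(\Sigma\times\mathcal{T})$-labelling of $\{0,1\}^*$ and hence not implementable by Éloïse under her restricted observation—this is exactly where Theorem~\ref{theo:positionalityChrono} is indispensable.
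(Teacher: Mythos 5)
Your proposal is correct and follows essentially the same route as the paper's own proof: identifying Éloïse's strategies in $\game_\A^{\emptyset}$ with pairs $(t,\sigma_t)$ via the observation relation $\sim$, establishing a measure-preserving, winning-condition-preserving correspondence of plays with $\game_{\A,t}^{=1}$, and invoking Theorem~\ref{theo:positionalityChrono} precisely where you say it is indispensable, namely to upgrade an arbitrary almost-surely winning strategy in the chronological, finite-out-degree acceptance game to a positional one. No gaps.
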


\begin{proof}
  Due to how $\sim$ is defined, a strategy for \Eloise in
  $\game_\mathcal{A}^{\emptyset}$ can also be viewed as a map
  $\sigma: \{0,1\}^*\rightarrow A$. As
  $A\subseteq \Sigma\times \mathcal{T}$, one can see $\sigma$ as a
  pair $(t,\sigma_t)$ where $t$ is an infinite $\Sigma$-labelled
  binary tree, and $\sigma_t$ is a \emph{positional} strategy for
  \Eloise in the acceptance game $\game_{\mathcal{A},t}^{=1}$. Now,
  once such a strategy $\sigma$ is fixed, the set of plays in
  $\game_\mathcal{A}^{\emptyset}$ where \Eloise respects $\sigma$ is
  in one-to-one correspondence with the set of plays in
  $\game_{\mathcal{A},t}^{=1}$ where she respects $\sigma_t$, and this
  correspondence preserves the property of being a winning
  play. Therefore, $\sigma=(t,\sigma_t)$ is almost-surely winning {in $\game_\A^{\emptyset}$} iff
  $\sigma_t$ is an almost-surely winning {positional}
  strategy in $\game_{\mathcal{A},t}^{=1}$ iff
  $t\in L^{\text{Alt}}_{\text{Qual},\Buchi(F)}(\A)$. {The
    last equivalence holds because}, thanks to
  Theorem~\ref{theo:positionalityChrono}, we can restrict our attention
  to positional strategies for \Eloise { in the perfect-information game} 
  $\game_{\mathcal{A},t}^{=1}$ {which, we recall, is chronological and of course has finite out-degree}. {Finally}, \Eloise has an almost-surely
  winning strategy {in $\game_\mathcal{A}^\emptyset$} iff there exists some tree
  $t\in L^{\text{Alt}}_{\text{Qual},\Buchi(F)}(\A)$.
\end{proof}

Combining Theorem~\ref{theo:main-qual} with Theorem~\ref{theo:AS_impInfo} directly implies decidability of the emptiness problem for alternating Büchi tree automata with qualitative semantics.

\begin{theorem}\label{thm:decidability-altBuchi}
The emptiness problem for alternating Büchi tree automata with qualitative semantics is decidable in exponential time.
\end{theorem}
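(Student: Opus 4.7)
The plan is to combine Theorem~\ref{theo:main-qual} with Theorem~\ref{theo:AS_impInfo} in the following manner. Given an alternating tree automaton $\A=(Q,q_\init,Q_E,Q_A,\Delta)$ together with a set $F\subseteq Q$, I would first construct the imperfect-information stochastic Büchi game $\game_\A^\emptyset$ defined just before Theorem~\ref{theo:main-qual}. By Theorem~\ref{theo:main-qual}, the language $L^{\text{Alt}}_{\text{Qual},\Buchi(F)}(\A)$ is non-empty if and only if \Eloise has an almost-surely winning strategy in $\game_\A^\emptyset$. Then by Theorem~\ref{theo:AS_impInfo}, the existence of such a strategy can be decided in exponential time in the size of $\game_\A^\emptyset$, which settles decidability.

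The step that needs some care is the complexity accounting. The vertex set of $\game_\A^\emptyset$ is of size linear in $|Q|$ and the relation $\sim$ has only three equivalence classes; however, the action alphabet $A\subseteq\Sigma\times\mathcal{T}$ may have size up to $|\Sigma|\cdot |Q|^{2|Q_E|}$, which is exponential in $|\A|$. Since Theorem~\ref{theo:AS_impInfo} runs in exponential time in the size of the game, we should check that plugging in a game already exponential in $|\A|$ does not yield a doubly exponential overall bound. This is indeed the case: the algorithms of \cite{DBLP:journals/tocl/Chatterjee014,CarayolLS18} proceed by a subset construction over the observation space, which here is of constant size, and their dependence on the action alphabet is only polynomial. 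So the resulting procedure remains singly exponential in $|\A|$.

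The hard part has in fact already been dealt with in Section~\ref{sec:emptiness-checking}: the correctness of the reduction (Theorem~\ref{theo:main-qual}) itself rests upon the positionality result for chronological stochastic Büchi games (Theorem~\ref{theo:positionalityChrono}), which is what allows \Eloise's strategy in the acceptance game $\game_{\A,t}^{=1}$ to be encoded as a $\mathcal{T}$-labelled binary tree that she can commit to while playing $\game_\A^\emptyset$. Consequently, the proof of Theorem~\ref{thm:decidability-altBuchi} reduces to the above two-line chain of implications plus the complexity check.
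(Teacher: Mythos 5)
Your proposal is correct and follows exactly the paper's route: the paper proves Theorem~\ref{thm:decidability-altBuchi} precisely by combining Theorem~\ref{theo:main-qual} with Theorem~\ref{theo:AS_impInfo}, with the real work concentrated in the positionality result (Theorem~\ref{theo:positionalityChrono}) underlying the correctness of the reduction. Your added complexity accounting for the exponentially large action alphabet $A\subseteq\Sigma\times\mathcal{T}$ is a sensible extra check that the paper leaves implicit, and your resolution of it is sound.
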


Regarding lower bound, following the same ideas as in the undecidability proof in Theorem~\ref{thm:undecidability-univcoB}, one can reduce the emptiness problem for simple probabilistic \emph{Büchi} automata with almost-sure semantics to the emptiness problem for universal\footnote{Following Proposition~\ref{proposition:altvsuniv}, we call \emph{universal} an alternating Büchi tree automata whose set of states belonging to \Eloise is empty.} Büchi tree automata with qualitative semantics.

\begin{theorem}\label{thm:decidability-univBuchi-lowerbound}
The emptiness problem for universal Büchi tree automata with qualitative semantics is hard for ExpTime.
\end{theorem}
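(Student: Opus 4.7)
The plan is to establish the ExpTime lower bound by a polynomial-time reduction from the emptiness problem for simple probabilistic Büchi word automata with almost-sure semantics, following the template laid out in the proof of Theorem~\ref{thm:undecidability-univcoB} but transposed from the co-Büchi setting to the Büchi setting. As indicated in the paragraph preceding the statement, the source hardness comes from this word-automaton problem.

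First, I would recall the construction in Theorem~\ref{thm:undecidability-univcoB}: given a simple probabilistic co-Büchi word automaton $\A = (Q, q_\init, \delta, F)$, one builds a universal co-Büchi tree automaton $\mathcal B$ on the alphabet $\Sigma$ of $\A$ that mirrors the probabilistic structure of $\A$ in the tree's branching. A simple transition $\delta(q,a) = \tfrac12 q_L + \tfrac12 q_R$ is turned into a universal tree transition $(q,a,q_L,q_R) \in \Delta_{\mathcal B}$, sending $q_L$ to the left child and $q_R$ to the right. With the accepting set $F$ kept unchanged, the key observation is that the uniform ``coin-flipping'' measure on branches of an input tree exactly simulates the distribution induced by the probabilistic transitions of $\A$; hence qualitative acceptance of the tree by $\mathcal B$ will correspond to almost-sure acceptance of some word by $\A$.

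Second, I would apply the same construction verbatim but use the Büchi condition $F$ in place of the co-Büchi one. The easy direction is: if $\A$ almost-surely Büchi-accepts a word $w = w_1 w_2 \cdots$, take the ``uniform'' tree $t_w$ with $t_w(u) = w_{|u|+1}$ for every node $u$; all branches of $t_w$ encode $w$, so the unique run of $\mathcal B$ on $t_w$ has a branch-distribution that coincides with the run-distribution of $\A$ on $w$, and qualitative Büchi acceptance of $t_w$ by $\mathcal B$ is immediate. Since the reduction is polynomial in $|\A|$, ExpTime-hardness of the source yields ExpTime-hardness of the target.

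The main obstacle is the converse correctness direction: an arbitrary tree $t$ accepted by $\mathcal B$ need not be ``uniform'', so different branches may encode different words, and one must show that existence of such a $t$ still implies that some single word is almost-surely Büchi-accepted by $\A$. In the co-Büchi proof this is handled by a property specific to how the co-Büchi condition propagates through the tree and to the simplicity of the probabilistic automaton; I expect the analogous property to transfer to the Büchi setting, but if it does not do so directly, the alternative is to enrich $\mathcal B$ so that only uniformly-labelled trees survive. This can be done without breaking the polynomial reduction by keeping, in states of $\mathcal B$, the expected letter at the current depth and forking universally at each node to verify that the label read matches the expected one on both children; since universal branching is free and the extra bookkeeping is $O(|\Sigma|)$, the size of $\mathcal B$ stays polynomial in $|\A|$, preserving the ExpTime lower bound.
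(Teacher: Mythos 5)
Your reduction is the right one in outline, but the transition set you describe is too small, and this breaks the converse direction --- which you correctly identify as the crux but do not actually close. Turning $\delta(q,a)=\frac{1}{2}q_1+\frac{1}{2}q_2$ into the \emph{single} transition $(q,a,q_1,q_2)$ yields a deterministic tree automaton whose unique run leaks the outcome of each probabilistic step into the branch direction, so the tree may choose the next letter \emph{after} seeing which state was reached. Concretely, take $\A$ over $\{a,b\}$ with $\delta(q_0,\cdot)=\frac{1}{2}q_1+\frac{1}{2}q_2$, $\delta(q_1,a)=\delta(q_2,b)=f$, $\delta(q_1,b)=\delta(q_2,a)=\bot$, $f$ looping back to $q_0$, $\bot$ absorbing, and $F=\{f\}$: no word is almost-surely Büchi-accepted (whatever the second letter is, the run falls into $\bot$ with probability at least $\frac{1}{2}$), yet the tree that writes $a$ on the left child and $b$ on the right child of every $q_0$-node is accepted by your $\mathcal{B}$, since every branch of its unique run visits $f$ infinitely often. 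The paper's automaton $\A_U$ contains \emph{both} orderings $(q,a,q_1,q_2)$ and $(q,a,q_2,q_1)$; quantifying universally over all such routings is exactly what defeats this adaptive cheating, and the converse is then proved not by any ``propagation'' of the acceptance condition but by identifying the runs of $\A_U$ with those of the probabilistic tree automaton $\A_2$ and invoking Lemmas~\ref{lem:cross} and~\ref{lem:word_to_tree} to conclude that almost every branch individually spells a word in $L^{=1}_{\Buchi(F)}(\A)$. Your fallback of restricting to uniformly-labelled trees cannot rescue the single-ordering construction: checking $t(u0)=t(u1)$ locally does not force two cousins to agree, and the set $\{t_w \mid w\in\Sigma^\omega\}$ is a standard example of a non-regular tree language, so no finite tree automaton (let alone a polynomial-size one) can enforce uniformity.

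A second, smaller gap concerns the source of hardness. You take ExpTime-hardness of emptiness for \emph{simple} probabilistic Büchi word automata with almost-sure semantics as given, but the result of~\cite{BGB12} is stated for general probabilistic automata, while your reduction (like the paper's) only applies to simple ones. The paper bridges this by inspecting the lower bound of~\cite{BGB12}: it reduces from almost-sure repeated reachability in partial-observation Markov decision processes~\cite{DeAlfaro99}, whose hardness survives when branching is restricted to supports of size at most two, and emptiness is insensitive to the exact transition probabilities as long as supports are preserved, so one may normalise to probabilities in $\{0,\frac{1}{2},1\}$. Some such argument is needed before your reduction can legitimately start from a simple automaton.
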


\begin{proof}
{Similarly to what was done in Section~\ref{sec:probabilistic-automata-infinite-words}
  for the co-Büchi acceptance condition, we define a probabilistic
  \emph{Büchi} automaton with almost-sure semantics on infinite words: for a probabilistic automaton $\A=(Q,q_\init,\delta)$ and a subset of states $F \subseteq Q$, we let
  $\Buchi(F)=\bigcap_{i\geq 0}Q^iQ^*FQ^\omega$ be the (measurable) set
  of runs that visit $F$ infinitely often. We then let:} 
$$
L^{= 1}_{\Buchi(F)}(\A) = \set{w \in \Sigma^\omega : P_\A^w(\Buchi(F)) = 1}.
$$

The \concept{emptiness problem} for probabilistic Büchi word automata with almost-sure semantics is the following decision problem:

\smallskip
\begin{center}\fbox{
\begin{tabular}[t]{rl}
\textbf{INPUT:} & A probabilistic word automaton $\A$ and a set $F \subseteq Q$ \\
\textbf{QUESTION:} & Is $L^{= 1}_{\Buchi(F)}(\A) = \emptyset$?
\end{tabular}
}\end{center}
\smallskip

{It is proved in \cite{BGB12} that this problem is complete for
ExpTime. Moreover, this result {still holds with } the extra
requirement that the automata are simple. Indeed, the lower bound in
\cite{BGB12} is by reduction of the almost-sure repeated reachability
for partial-observation Markov decision processes. This latter problem was shown to be ExpTime-complete by
de~Alfaro~\cite{DeAlfaro99}. The hardness proof in~\cite{DeAlfaro99}, based on the concept of  blindfold games as defined by Reif in his 
seminal paper~\cite{Reif84}, {survives (with the same proof) if the branching in the partial-observation Markov decision process has at most two states. Consequently, hardness for ExpTime already
holds for probabilistic automata whose distributions involved in the transition function 
have a support of at most two states}. Finally, as observed in
\cite[Remark~8.9]{BGB12}, emptiness is not affected by changing the
probabilities in the distributions {as long as} the support is unchanged: therefore, one can always reduce to the case of simple automata.}

Now, following exactly the same path as in Theorem~\ref{thm:undecidability-univcoB} one proves that the emptiness problem for simple probabilistic Büchi automata with almost-sure semantics can be polynomially reduced to the emptiness problem for universal Büchi tree automata with qualitative semantics, which implies the announced lower-bound.
\end{proof}


\section{Undecidability of the Emptiness Problem for Universal Co-Büchi Tree Automata with Qualitative Semantics}
\label{sec-undec-automata}

In this section we prove our main undecidability result on the emptiness problem for universal 
co-Büchi tree automata with qualitative semantics, 
from which we will then derive the undecidability of \MSOzeropath in Section~\ref{sec-cor}. 
We prove this result by reduction from the emptiness problem for
simple probabilistic co-Büchi word automata with almost-sure
semantics. As already mentioned (Proposition~\ref{prop-cobuchi}) it was shown in~\cite{BGB12} that this problem is
undecidable for general probabilistic word automata, but in our
reduction to probabilistic tree automata it will be crucial to work with simple
ones.
We thus start by giving a proof of this slightly stronger
result. 
 
\begin{proposition}
\label{prop-simple-cobuchi}
The emptiness problem for simple probabilistic co-Büchi word automata with almost-sure semantics is undecidable.
\end{proposition}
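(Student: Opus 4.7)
The plan is to reduce the value~$1$ problem for simple probabilistic word automata on finite words, whose undecidability is recalled in Theorem~\ref{thm:value1}, to the emptiness problem in the proposition. Given a simple probabilistic automaton $\A = (Q, q_\init, \delta)$ together with a set $F \subseteq Q$, I will build a simple probabilistic co-Büchi word automaton $\A'$ over the alphabet $\Sigma \cup \set{\sharp}$ (with $\sharp \notin \Sigma$ a fresh separator symbol) and a set $F'$ such that $L^{=1}_{\CoBuchi(F')}(\A') \neq \emptyset$ if and only if $\A$ has value~$1$. Without loss of generality I may assume $q_\init \notin F$, by first prefixing $\A$ with a fresh initial state that duplicates the transitions from $q_\init$ on every letter.

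The idea is to view any input word containing infinitely many $\sharp$'s as a sequence $u_1 \sharp u_2 \sharp u_3 \sharp \cdots$ of finite blocks in $\Sigma^*$ and to run $\A$ on each block independently, using $\sharp$ as a reset signal. Concretely, $\A'$ will be a product of two components. The first has state space $Q \cup \set{\text{fail}}$ and behaves as follows: on a letter of $\Sigma$, it applies $\delta$; on $\sharp$ from a state $q \in F$, it resets deterministically to $q_\init$; on $\sharp$ from a state $q \notin F$ (including $\text{fail}$), it moves deterministically to $\text{fail}$; and from $\text{fail}$ on a letter $a \in \Sigma$, it jumps to $\delta(q_\init, a)$, effectively restarting a fresh $\A$-run. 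The second component is the gadget $\C$ from Example~\ref{ex:infinitelymanysharps}. The co-Büchi set is chosen as $F' = (\set{\text{fail}} \times \set{p_1,p_2}) \cup (Q \times \set{p_1})$: the first part penalises block rejections, and the second forces any almost-surely accepted input to contain infinitely many $\sharp$'s. At every transition, at most one of the two components uses non-trivial randomness (either the simple $\delta(q,a)$ of $\A$ with $\C$ staying put on $\Sigma$-letters, or the fair-coin split of $\C$ on $\sharp$ with the $\A$-side routing deterministically), so $\A'$ inherits simplicity from $\A$ and $\C$.

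For correctness, the reset mechanism on $\sharp$ makes the events $E_i = $ ``at the $i$-th occurrence of $\sharp$ the $\A$-component is not in $F$'' independent (when the blocks $u_i$ are non-empty), with $\proba(E_i) = 1 - P_\A^{u_i}(\Last(F))$. If $\A$ has value~$1$, I choose non-empty blocks $u_i$ with $P_\A^{u_i}(\Last(F)) \geq 1 - 2^{-i}$: since $\sum_i 2^{-i} < \infty$, Borel--Cantelli gives that only finitely many $E_i$ occur almost surely, so $\text{fail}$ is visited only finitely often, and the gadget guarantees that $Q \times \set{p_1}$ is almost-surely visited only finitely often, hence $w = u_1 \sharp u_2 \sharp \cdots$ is almost-surely accepted. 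Conversely, if $\sup_u P_\A^u(\Last(F)) = 1 - \delta$ for some $\delta > 0$, then for any $w$ with infinitely many $\sharp$'s the conditional probabilities $\proba(\neg E_i \mid \text{past})$ are all bounded above by $1 - \delta$, so the probability of no failure among the $k$-th through $(k+N-1)$-th $\sharp$'s is at most $(1-\delta)^N$, which tends to $0$ as $N \to \infty$; thus $\text{fail}$ is visited infinitely often almost surely, and words with only finitely many $\sharp$'s are excluded by the $Q \times \set{p_1}$ component of $F'$. The main technical obstacle I anticipate is preserving simplicity of $\A'$: a naive product of two simple transitions is generically supported on four outcomes of probability $\frac{1}{4}$ and is not simple, so the care taken above in ensuring that at most one component fires non-trivial randomness at any transition is precisely what keeps $\A'$ within the simple fragment.
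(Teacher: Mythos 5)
Your reduction is correct and is essentially the paper's own proof: the same reduction from the value-$1$ problem (Theorem~\ref{thm:value1}), the same $\sharp$-triggered reset automaton with a fail/restart state (playing the role of the paper's $q'_\init$), the same Borel--Cantelli argument in both directions, and the same gadget $\C$ from Example~\ref{ex:infinitelymanysharps} to force infinitely many $\sharp$'s. The only structural difference is that you enforce the conjunction of the two almost-sure co-Büchi conditions by a synchronous product --- with the correct and necessary observation that simplicity survives because the two components never randomise on the same letter --- whereas the paper reaches the two automata by a single fair coin flip on an initial $\sharp$, which sidesteps any discussion of products. One caveat: your ``WLOG $q_\init \notin F$'' step, implemented by duplicating $q_\init$ into a fresh non-final initial state, sets the acceptance probability of the empty word to $0$, so in the degenerate case where $q_\init \in F$ but no non-empty word has acceptance probability close to $1$, the modified automaton no longer has value $1$ even though the original one does, and the reduction as written would answer incorrectly; this is harmless since $q_\init \in F$ is checkable and makes the value-$1$ instance trivially positive, but that case split should be made explicit rather than hidden in a ``without loss of generality''.
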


\begin{proof}
  The proof is by reduction from the value 1 problem for simple
  probabilistic automata, which is undecidable
  (Theorem~\ref{thm:value1}).

  Let $\A = (Q,q_\init,\delta)$ be a simple probabilistic word automaton over some alphabet $\Sigma$,
and let $F \subseteq Q$.
Let $\sharp\notin \Sigma$ be a fresh symbol and let $\A' = (Q\cup\{q'_\init\},q_\init,\delta')$ be the simple probabilistic automaton over $\Sigma\cup\{\sharp\}$ obtained from $\A$ as follows:
\begin{itemize}
\item[-] $q'_\init$ is a new state with  $\delta'(q_\init',a)=\delta(q_\init,a)$, for any letter $a\neq \sharp$, and $\delta'(q_\init,\sharp) = q_\init'$;
	\item[-] $\delta'(q,a) = \delta(q,a)$, for any state $q\in Q$ and any letter $a\neq \sharp$;
	\item[-] $\delta'(q,\sharp) = q_\init$ if $q\in F$ and $\delta'(q,\sharp) = q_\init'$ otherwise, for any state $q\in Q$.
\end{itemize}
{We equip $\A'$ with the co-Büchi condition $\{q_\init'\}$.} Note that $\A'$ is simple. 

For a sequence of words $(u_i)_{i\geq 1}$ over $\Sigma$ we let $x_i$ be
  the acceptance probability of $\A$ over $u_i$, for every
  $i\geq 1$.
Now consider an infinite word of the form
$w= \sharp u_1\sharp u_2\sharp u_3\cdots$, {and let  $E_i$ be
  the event: ``$\A'$ ends in $q_\init'$
when it reads $u_i\sharp$ from $q_\init$ or $q_\init'$''. Each $E_i$ has
probability $1-x_i$, and they are mutually independent. Also, $w$ is
almost-surely accepted by $\A'$ if and only if the probability that
infinitely many of the events $E_i$ occur is zero.} It is then  a
direct consequence of the Borel-Cantelli Lemma (and its converse) that
$w$ is almost-surely accepted by $\A'$  if and only if $\sum_{i=1}^\infty 1-x_i < \infty$. 

It follows that $\A$ has value $1$ if and only if $\A'$ almost-surely
accepts a word of the form $w= \sharp u_1\sharp u_2\sharp
u_3\cdots$. Indeed, if $\A$ has value $1$ then there is a sequence of
words $(u_i)_{i\geq 1}$ such that $x_i\geq 1-\frac{1}{i^2}$ and
therefore such that $\sum_{i=1}^\infty 1-x_i < \infty$; conversely, if
a sequence of words $(u_i)_{i\geq 1}$ is such that
$\sum_{i=1}^\infty 1-x_i < \infty$, one must have
$\lim_{x\to\infty} x_i = 1$.

To conclude the proof it is sufficient to build a simple probabilistic
co-Büchi word automaton $\B$ that almost-surely accepts only those
words that are almost-surely accepted by $\A'$, starting with a $\sharp$
and containing infinitely many $\sharp$.

Consider the automaton $\C$ from Example~\ref{ex:infinitelymanysharps} and recall that, when equipped with the acceptance condition $\CoBuchi(\{p_1\})$, it accepts those infinite words over $\Sigma\cup\{\sharp\}$ that contain infinitely many occurrences of $\sharp$.

Now, define $\B$ as the simple probabilistic automaton consisting of a fresh initial state $q_\init''$ together with a copy of $\A'$ and a copy of $\C$. From $q_\init''$ the only possible action is to read a $\sharp$ and go either to the initial state of $\B$ with probability $\frac{1}{2}$ or to the initial state of $\C$ with probability $\frac{1}{2}$. 

Then it is immediate that $L^{= 1}_{\CoBuchi(\{q_\init',p_1\})}(\B)$ is empty if and only if $\A$ does not have value~1.
\end{proof}

{Our main undecidability result of Theorem~\ref{thm:undecidability-univcoB} contrasts with two decidability
results, for probabilistic Büchi {tree}
automata~\cite{CHS14} and for alternating Büchi
{tree} automata~\cite{fijalkow2013emptiness} (Theorem~\ref{thm:decidability-altBuchi}), both with qualitative
semantics.}

\begin{theorem}\label{thm:undecidability-univcoB}
The emptiness problem for universal co-Büchi tree automata with qualitative semantics is undecidable.
\end{theorem}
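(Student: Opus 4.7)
The plan is to reduce from the emptiness problem for simple probabilistic co-Büchi word automata with almost-sure semantics, which is undecidable by Proposition~\ref{prop-simple-cobuchi}. Given a simple probabilistic co-Büchi word automaton $\A = (Q, q_\init, \delta)$ with final set $F \subseteq Q$, I construct a universal co-Büchi tree automaton $\B$ over the same alphabet $\Sigma$, sharing the same states, initial state, and accepting set $F$: for each simple probabilistic transition $\delta(q,a) = \frac{1}{2}q_0 + \frac{1}{2}q_1$ of $\A$, the transition relation $\Delta_\B$ contains the single universal transition $(q, a, q_0, q_1)$. This exploits the fact that the two equiprobable outcomes of each simple transition of $\A$ can be directly routed to the two children of each tree node. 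In particular, $\B$ is deterministic in the tree-automaton sense, admitting a unique run $\rho_t$ on every input tree $t$.

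For the easy direction, given $w \in L^{=1}_{\CoBuchi(F)}(\A)$, I consider the constant tree $t_w$ defined by $t_w(u) = w_{|u|+1}$ for all $u \in \{0,1\}^*$. Along any branch $b \in \{0,1\}^\omega$, the state sequence of $\rho_{t_w}$ is precisely a run of $\A$ on $w$ whose internal coin tosses correspond bit by bit to $b$. Since $t_w$ is depth-constant, the coin-flipping measure on branches coincides with the distribution on coin tosses of $\A$'s runs on $w$; hence $\rho_{t_w}$ is qualitatively co-Büchi accepting if and only if $\A$ almost-surely accepts $w$, which holds by hypothesis. Thus $t_w \in L^{\forall}_{\text{Qual},\CoBuchi(F)}(\B)$.

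The hard direction, namely recovering a word in $L^{=1}_{\CoBuchi(F)}(\A)$ from any $t \in L^{\forall}_{\text{Qual},\CoBuchi(F)}(\B)$, is the main obstacle: for a non-constant tree $t$, the word $t[b]$ read along a branch and the coin tosses corresponding to $b$ are both determined by the single branch $b$ and are in general strongly correlated. My approach, following and clarifying the argument of~\cite{fijalkow2013emptinessComplete}, goes through the probabilistic tree automaton $\mathcal{P}_\A$ with the same states as $\A$ and transition function $\delta_{\mathcal{P}_\A}(q,a) = 1 \cdot (q_0,q_1)$ whenever $\delta(q,a) = \frac{1}{2}q_0 + \frac{1}{2}q_1$. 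Inspection of the definitions shows that the Markov chain $\M_{\mathcal{P}_\A}^t$ has the same distribution as the run $\rho_t$ equipped with the uniform branch measure; hence $\B$ qualitatively accepts $t$ if and only if $\mathcal{P}_\A$ almost-surely qualitatively accepts $t$. It therefore suffices to show that acceptance of some tree by $\mathcal{P}_\A$ entails almost-sure acceptance of some word by $\A$. Here I exploit the simplicity of $\A$, which forces the only randomness in $\M_{\mathcal{P}_\A}^t$ to be the uniform branch selection and to be independent of the labels of $t$; this independence is used to transfer acceptance of an arbitrary tree to acceptance of a depth-constant tree $t_w$, equivalently, to a word $w \in L^{=1}_{\CoBuchi(F)}(\A)$. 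Combining both directions with Proposition~\ref{prop-simple-cobuchi} yields the theorem.
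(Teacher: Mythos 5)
Your overall plan---reducing from emptiness of simple probabilistic co-Büchi word automata (Proposition~\ref{prop-simple-cobuchi}) and routing the two equiprobable successors of each simple transition to the two children of a tree node---is the paper's, and your easy direction matches the paper's. But your tree automaton is not the paper's, and the difference is fatal. You put only the single transition $(q,a,q_0,q_1)$ into $\Delta_\B$, so $\B$ is deterministic; the paper's automaton $\A_U$ contains \emph{both} orderings $(q,a,q_1,q_2)$ and $(q,a,q_2,q_1)$. With your $\B$, the direction taken at a node determines which successor state is reached, and the tree may label the two children differently, so the next letter read can depend on the outcome of the coin toss. This is exactly the correlation you flag, and your proposed remedy---that the randomness in $\M_{\mathcal{P}_\A}^t$ ``is independent of the labels of $t$''---is false. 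Concretely: take $\Sigma=\{a,b\}$, states $\{q_\init,q_0,q_1,\bot\}$ with $\bot$ absorbing and $F=\{\bot\}$, and simple transitions $\delta(q_\init,a)=\delta(q_0,a)=\delta(q_1,b)=\frac{1}{2}q_0+\frac{1}{2}q_1$, all other state/letter pairs going to $\bot$. No infinite word is accepted with probability exceeding $\frac{1}{2}$: after the first $a$ the run is in $q_0$ or $q_1$ with probability $\frac{1}{2}$ each, and these states demand different next letters, so half the mass falls into $\bot$ whatever the second letter is. Yet the tree $t$ with $t(\epsilon)=a$, $t(u0)=a$ and $t(u1)=b$ for all $u$ is accepted by your $\B$: its unique run assigns $q_0$ to every node ending in $0$ and $q_1$ to every node ending in $1$, matches the labels, never reaches $\bot$, and hence every branch is co-Büchi accepting. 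So the map $\A\mapsto\B$ does not preserve (non-)emptiness; the reduction itself is wrong, not merely under-justified.

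The paper neutralises this correlation precisely by making the automaton genuinely universal: with both orderings present, \emph{every} run must be qualitatively accepting, including the runs that swap the two children at adversarially chosen nodes (in the example above, the swapped run at the root sends $q_1$ to the $a$-labelled child and immediately falls into $\bot$ on half the branches, so the tree is correctly rejected by $\A_U$). This universal quantification over runs is then converted into a statement about the probabilistic tree automaton $\A_2$, which picks one of the two orderings uniformly at random; its induced Markov chain coincides with that of $\A_1$, which sends the \emph{same} randomly chosen successor to both children, so that the state evolution becomes independent of the branch direction. Only then does Lemma~\ref{lem:word_to_tree} apply to conclude that almost every branch of $t$ reads a word in $L^{=1}_{\CoBuchi(F)}(\A)$. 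Your single-ordering, deterministic automaton admits no analogue of this decorrelation step, and the gap cannot be patched without changing the construction.
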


To prove Theorem~\ref{thm:undecidability-univcoB} we construct a reduction 
from the emptiness problem for simple probabilistic co-Büchi word automata with almost-sure semantics to 
the emptiness problem for universal co-Büchi tree automata with qualitative semantics.
The correctness of the reduction relies on the two following results {(Lemma~\ref{lem:word_to_tree} and Lemma~\ref{lem:cross})}.

Let $\A = (Q,q_\init,\delta)$ be a simple probabilistic word automaton and $F \subseteq Q$.
Define the following probabilistic tree automata:
\begin{itemize}
	\item $\A_1 = (Q,q_\init,\delta')$ where 
$\delta'(p, a) = \frac{1}{2} (q_1, q_1) + \frac{1}{2} (q_2, q_2)$
if $\delta(p, a) = \frac{1}{2} q_1 + \frac{1}{2} q_2$.
	\item $\A_2 = (Q,q_\init,\delta'')$ where 
$\delta''(p, a) = \frac{1}{2} (q_1, q_2) + \frac{1}{2} (q_2, q_1)$
if $\delta(p, a) = \frac{1}{2} q_1 + \frac{1}{2} q_2$.
\end{itemize} 

{Lemma~\ref{lem:word_to_tree} relates $\A_1$ and $\A$, where $\mu$ denotes the
coin-flipping measure on branches defined in
Section~\ref{sec:univ-tree-automata}, while Lemma~\ref{lem:cross} relates $\A_1$ and $\A_2$.}

\begin{lemma}[{\cite[Proposition~43]{CHS14}}]\label{lem:word_to_tree}
The following holds:
\[
L^{\forall^{=1}}_{\text{Qual},\CoBuchi(F)}(\A_1)=\set{t \in \Tree(\Sigma) : \mu \left( \set{ b \in \set{0,1}^\omega :\ t[b] \in L^{=1}_{\CoBuchi(F)}(\A) } \right) = 1}.
\]
\end{lemma}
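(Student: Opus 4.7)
The plan is to exploit the special form of the transition function $\delta'$ of $\A_1$: when $\delta(p,a) = \frac{1}{2} q_1 + \frac{1}{2} q_2$, we have $\delta'(p,a) = \frac{1}{2}(q_1,q_1) + \frac{1}{2}(q_2,q_2)$, which assigns the \emph{same} state to both children. Consequently, in the Markov chain $\M_{\A_1}^t$, the successor of a state $(p,u)$ with $t(u)=a$ is the uniform mixture $\frac{1}{4}(q_1,u0) + \frac{1}{4}(q_1,u1) + \frac{1}{4}(q_2,u0) + \frac{1}{4}(q_2,u1)$. Thus, at each step of $\M_{\A_1}^t$, the next $Q$-component is sampled according to $\delta(p,a)$ \emph{independently} of the next direction, which is uniform on $\{0,1\}$.

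First, I would use this step-wise independence to set up a measure-preserving bijection between infinite paths of $\M_{\A_1}^t$ and pairs $(b, r) \in \{0,1\}^\omega \times Q^\omega$, in such a way that $b$ has law $\mu$ and, conditionally on $b$, $r$ is distributed as a run of $\A$ on $t[b]$, i.e.\ according to $P_\A^{t[b]}$. Under this bijection, the event $\CoBuchi(F) \subseteq (Q \times \{0,1\}^*)^\omega$ of $\M_{\A_1}^t$, which concerns the $Q$-components only, corresponds exactly to the event that $r$ visits $F$ only finitely often, namely the co-Büchi event for $\A$ on $t[b]$.

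Second, I would apply Fubini's theorem to the resulting product-type decomposition to obtain
\[
P_{\A_1}^t(\CoBuchi(F)) = \int_{\{0,1\}^\omega} P_\A^{t[b]}(\CoBuchi(F)) \, d\mu(b).
\]
Since the integrand lies in $[0,1]$, the left-hand side equals $1$ if and only if the integrand equals $1$ for $\mu$-almost every $b$, which is precisely $\mu(\{b : t[b] \in L^{=1}_{\CoBuchi(F)}(\A)\}) = 1$. This yields the claimed set equality.

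The main technical obstacle is to rigorously justify the coupling: one needs to verify on cones (cylinder sets) that the law of a length-$k$ prefix of a path in $\M_{\A_1}^t$ factors as the uniform law on $\{0,1\}^k$ against the law of a length-$k$ prefix of a run of $\A$ on the corresponding word-prefix of $t[b]$, and then extend uniquely to the full $\sigma$-algebra via Carathéodory's extension theorem. This step is an inductive bookkeeping exercise using the explicit form of $T_{\A_1}^t$ computed above; once it is in place, the remainder of the argument is routine measure theory.
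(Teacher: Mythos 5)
The paper does not actually prove this lemma --- it is imported wholesale as Proposition~43 of \cite{CHS14} --- so there is no in-paper argument to compare against. Your proposal is a correct and essentially self-contained proof, and it is the natural one: the computation of $T_{\A_1}^t((p,u))$ as $\frac14(q_1,u0)+\frac14(q_1,u1)+\frac14(q_2,u0)+\frac14(q_2,u1)$ is right, each transition probability factors as $\frac12\cdot\delta(p,t(u))(q)$, and on cylinders this gives exactly $P_{\A_1}^t = \mu$ skew-multiplied by the kernel $b\mapsto P_\A^{t[b]}$, so that $P_{\A_1}^t(\CoBuchi(F))=\int P_\A^{t[b]}(\CoBuchi(F))\,d\mu(b)$ and the conclusion follows since the integrand is in $[0,1]$. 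One small precision: what you invoke is not Fubini for a genuine product measure but its version for a transition kernel (the conditional law of the run depends on $b$ through the word $t[b]$), together with measurability of $b\mapsto P_\A^{t[b]}(\CoBuchi(F))$; both are immediate from your cylinder-level factorisation, since $\CoBuchi(F)$ is a countable Boolean combination of cylinders and cylinder probabilities depend on $b$ only through finitely many coordinates. With that said, the argument is complete and matches what one would find in the cited source.
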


{Now, for a fixed tree $t$, 
the Markov chains $\M^t_{\A_1}$ and
$\M^t_{\A_2}$ associated with $\A_1$ and $\A_2$ respectively are equal: indeed, they
have the same states $Q\times\{0,1\}^*$, the same initial state
$(q_\init,\epsilon)$ and the same probability transition function $T$ given
by
$$T((q,u))=\frac{1}{4} (q_1,u0) + \frac{1}{4} (q_1,u1) + \frac{1}{4}
(q_2,u0) + \frac{1}{4} (q_2,u1)$$ where
$\delta(q,t(u))=\frac{1}{2}q_1+\frac{1}{2}q_2$ in $\A$. As a consequence, $\A_1$ and $\A_2$ have the same qualitative co-Büchi  semantics.

\begin{lemma}\label{lem:cross} 
\[
L^{\forall^{=1}}_{\text{Qual},\CoBuchi(F)}(\A_2) = 
L^{\forall^{=1}}_{\text{Qual},\CoBuchi(F)}(\A_1). 
\]
\end{lemma}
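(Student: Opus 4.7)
The plan is to verify directly that for every tree $t$, the two Markov chains $\M^t_{\A_1}$ and $\M^t_{\A_2}$ coincide as probabilistic structures, a fact essentially asserted in the paragraph immediately preceding the statement. Once the chains are identified, the induced probability measures on infinite paths agree, the events $\CoBuchi(F)$ are measured identically, and the two almost-sure languages are forced to be equal.

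First I would fix a tree $t$ and a state $(q,u) \in Q \times \{0,1\}^*$, and set $\delta(q,t(u)) = \frac{1}{2}q_1 + \frac{1}{2}q_2$ in the underlying word automaton $\A$. Unfolding the definition of the Markov chain induced by a probabilistic tree automaton (Section~\ref{def:prob_tree_automata}), I would compute $T^t_{\A_1}((q,u))$ from the "parallel" transition $\delta'(q,t(u)) = \frac{1}{2}(q_1,q_1) + \frac{1}{2}(q_2,q_2)$, and $T^t_{\A_2}((q,u))$ from the "crossed" transition $\delta''(q,t(u)) = \frac{1}{2}(q_1,q_2) + \frac{1}{2}(q_2,q_1)$. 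In each case the $\frac{1}{2}$ coming from the distribution on pairs and the $\frac{1}{2}$ coming from the choice of direction multiply out, and the four resulting contributions collect to the uniform distribution
\begin{equation*}
\tfrac{1}{4}(q_1,u0) + \tfrac{1}{4}(q_1,u1) + \tfrac{1}{4}(q_2,u0) + \tfrac{1}{4}(q_2,u1).
\end{equation*}
The only bookkeeping subtlety concerns the degenerate case $q_1 = q_2$: by the distribution convention fixed in Section~\ref{sec-prelim}, both $\delta'(q,t(u))$ and $\delta''(q,t(u))$ then collapse to the Dirac distribution on $(q_1,q_1)$, and the computed chain transition again agrees on both sides.

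Having established equality of the transition functions, together with the obvious equalities of state spaces ($Q \times \{0,1\}^*$) and initial states ($(q_\init,\epsilon)$), I obtain $\M^t_{\A_1} = \M^t_{\A_2}$, and hence the induced path measures $P^t_{\A_1}$ and $P^t_{\A_2}$ coincide. It follows that $P^t_{\A_1}(\CoBuchi(F)) = P^t_{\A_2}(\CoBuchi(F))$ for every $t$, which is exactly the claimed identity $L^{\forall^{=1}}_{\text{Qual},\CoBuchi(F)}(\A_2) = L^{\forall^{=1}}_{\text{Qual},\CoBuchi(F)}(\A_1)$. There is no real obstacle here: the entire content of the proof is the single transition-function verification sketched above, whose conceptual point is that "parallel" duplication and "crossed" swapping over a uniform $\frac{1}{2}$-$\frac{1}{2}$ split induce identical conditional distributions on $\{q_1,q_2\} \times \{u0,u1\}$.
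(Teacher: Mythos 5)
Your proof is correct and is essentially the paper's own argument: the paper justifies the lemma precisely by observing that $\M^t_{\A_1}$ and $\M^t_{\A_2}$ have identical state spaces, initial states, and transition functions (both equal to the uniform distribution $\frac{1}{4}(q_1,u0)+\frac{1}{4}(q_1,u1)+\frac{1}{4}(q_2,u0)+\frac{1}{4}(q_2,u1)$), hence identical path measures. Your additional remark on the degenerate case $q_1=q_2$ is a harmless and welcome bit of extra care.
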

}

We are now ready to prove Theorem~\ref{thm:undecidability-univcoB}.

\begin{proof}[Proof of Theorem~\ref{thm:undecidability-univcoB}]
Let $\A = (Q,q_\init,\delta)$ be a simple probabilistic word automaton and $F \subseteq Q$.
{We define the tree automaton $\A_U = (Q,q_\init,\Delta)$ where 
\[
\Delta = \set{(q, a, q_1, q_2 ), (q, a, q_2, q_1) \mid \delta(q, a) = \frac{1}{2} q_1 + \frac{1}{2} q_2}.
\]

Now, we establish that $L^{=1}_{\CoBuchi(F)}(\A) \neq \emptyset$ if, and only if, $L^{\forall}_{\text{Qual},\CoBuchi(F)}(\A_U)\neq \emptyset$, which concludes the proof of Theorem~\ref{thm:undecidability-univcoB}.

Assume that there is some $w=w_0w_1\cdots \in L^{=1}_{\CoBuchi(F)}(\A)$, that is such that $P_\A^w(\CoBuchi(F)) = 1$. We construct  a tree $t_w$ whose branches are all equal to $w$, i.e. $t_w(u)=w_{|u|}$ for every $u\in\{0,1\}^*$.

For a fixed run $\rho$ of $\A_U$ over $t_w$, there is a bijection between the infinite paths
of $\M_{\A}^w$ and $\M_{\A_U}^\rho$ that preserves the measure (it suffices to
notice that the measure is preserved for cones) and also the property of visiting finitely many states in $F$.  As a result, 
$P_\A^w(\CoBuchi(F)) = 1$ entails
$P_{\A_U}^\rho(\CoBuchi(F)) = 1$, for all runs $\rho$. Thus
$t_w \in L^{\forall}_{\text{Qual},\CoBuchi(F)}(\A_U)$.}

The converse implication is not immediate  {because a  tree 
$t \in L^{\forall}_{\text{Qual},\CoBuchi(F)}(\A_U)$ may not necessarily be of the form $t_w$ for some word $w \in \Sigma^\omega$.}

In Section~\ref{def:prob_tree_automata}, we informally said that an equivalent definition of almost-sure acceptance for probabilistic tree automata can be obtained by associating a probability measure on the set of all runs induced by a tree, and {by} requiring the measure of the set of qualitatively accepting runs to be {equal to} $1$; in this approach the notion of a run is the same as for (non-probabilistic) tree automata (see \cite{CHS14} for details).

Now, consider the probabilistic tree automaton $\A_2$ used in Lemma~\ref{lem:cross}: for a fixed tree $t$, the set of runs of $\A_U$ over $t$ is the same as the set of runs of $\A_2$ over $t$.
Since \emph{all} runs of $\A_U$ over $t$ are qualitatively accepted, 
then all runs of $\A_2$ over $t$ are qualitatively accepted {too}, 
so the set of qualitatively accepting runs of $\A_2$ over $t$ has measure $1$. {In other words,} $t \in L^{\forall^{=1}}_{\text{Qual},\CoBuchi(F)}(\A_2)$.
{Hence, by Lemma~\ref{lem:cross}, $t \in L^{\forall^{=1}}_{\text{Qual},\CoBuchi(F)}(\A_1)$}.
Finally, using Lemma~\ref{lem:word_to_tree}, {almost all branches of $t$ are in $L^{=1}_{\CoBuchi(F)}(\A)$, entailing 
$L^{=1}_{\CoBuchi(F)}(\A) \neq\emptyset$.}
\end{proof}

\section{Corollaries}
\label{sec-cor}
In this section we derive {two corollaries from Theorem~\ref{thm:undecidability-univcoB}:
the undecidability of the \MSOzeropath theory of the infinite binary tree (Theorem~\ref{thm:undecidabiltyMSOzeropath}), and the undecidability of the emptiness problem for alternating probabilistic automata with non-zero semantics (Theorem~\ref{thm:alternating}).} 

\subsection{Undecidability of \MSOzeropath} \label{sec:msozero}

{Before stating the problem, we refer the reader to \cite{Thomas97} for definitions and basic properties on Monadic Second Order logic (\MSO) for trees.

The logic \MSOzero, introduced and studied in~\cite{michalewski2016measure,michalewski2018monadic}, 
extends  \MSO with a probabilistic operator $\forall^{=1} X. \varphi$ stating that the set of all sets $X$ satisfying $\varphi$ contains a subset of Lebesgue-measure one.}
Michalewski, Mio and Skrzypczak proved in these papers that the \MSOzero-theory of the infinite
binary tree is undecidable. 
They also considered a variant of this logic, denoted by \MSOzeropath, 
in which the quantification in the probabilistic operator is restricted to sets of nodes that form a path. 
They proved that, in terms of expressiveness, \MSOzeropath is between  \MSO and \MSOzero, with a strict gain in expressiveness compared to \MSO.
However, they left open the question of the decidability of the \MSOzeropath theory of the infinite binary tree~\cite[Problem 4]{michalewski2018monadic}.

In this section, we establish that in fact \MSOzeropath is undecidable over the infinite binary tree, as a direct consequence of Theorem~\ref{thm:undecidability-univcoB}.

The syntax of \MSOzeropath is given by the following grammar:
\[\varphi ::= \suc_0(x,y) \pipe \suc_1(x,y) \pipe x\in X 
\pipe \neg \varphi \pipe \varphi_1 \wedge \varphi_2 \pipe \forall x.\varphi \pipe 
\forall X.\varphi \pipe \forallonep X.\varphi\]
where $x$ ranges over a countable set of \emph{first-order variables}, and $X$ ranges over a countable set of \emph{monadic second-order variables} (also called \emph{set variables}). The quantifier $\forallonep$ is called the \concept{path-measure quantifier}.

The semantics of \MSO on the infinite binary tree is defined by
interpreting the first-order variables $x$ as nodes, and the set
variables $X$ as subsets of nodes. Ordinary quantification and the
Boolean operations are defined as usual, $x \in X$ is interpreted as
the membership relation, and $\suc_i$ (for $i = 0,1$) is interpreted
as the binary relation $\{(x,x\cdot i)\pipe x\in \words\}$.  We now
describe how to interpret quantified formulas of the form $\forallonep X. \varphi$.
A \concept{path} is a prefix-closed non-empty set
$X \subseteq \{0,1\}^{*}$ such that for any node $v \in X$ either
$v  0 \in X$ or $v 1 \in X$, but not both.
We let $\Paths$ denote the set of all paths. 
Note that there is a one-to-one correspondence between $\Paths$ and the set $\{0,1\}^\omega$ of branches. 
Thus, the coin-flipping measure $\mu$, defined over $\{0,1\}^\omega$ (see Section~\ref{sec:univ-tree-automata}), induces a
measure over $\Paths$, which we write $\mu$.  
We let $t \models \forallonep X. \varphi$ if there exists a measurable subset of paths $\Pi\subseteq \Paths$ with $\mu(\Pi)=1$ 
and such that for any $\pi\in \Pi$ one has $t,\pi \models \varphi$.

A \emph{sentence} is a formula without free variables. 
The \concept{\MSOzeropath-theory} of the infinite binary tree is the set of all \MSOzeropath-sentences $\varphi$ that hold in
the infinite binary tree.

We identify a $\{0,1\}^n$-tree $t$ with a tuple of $n$ subsets of nodes $\mathrm{setTuple}(t)=(X_1,\dots, X_n)$ where a node $x$ belongs to $X_i$ if and only if the $i$-th element of $t(x)$ is $1$. This immediately permits to interpret an \MSOzeropath formula with $n$ free set variables on $\{0,1\}^n$-trees.

The following result is an easy consequence of Theorem~\ref{thm:undecidability-univcoB}.

\begin{theorem}
\label{thm:undecidabiltyMSOzeropath}
  The \MSOzeropath-theory of the infinite binary tree is undecidable.
\end{theorem}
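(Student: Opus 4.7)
The plan is to reduce the emptiness problem for universal co-Büchi tree automata with qualitative semantics to the satisfiability of an \MSOzeropath sentence over the infinite binary tree, and then invoke Theorem~\ref{thm:undecidability-univcoB}. More precisely, given a universal co-Büchi tree automaton $\A=(Q,q_\init,\Delta)$ with set of final states $F\subseteq Q$, I will effectively construct an \MSOzeropath-sentence $\varphi_\A$ that holds in the infinite binary tree if and only if $L^{\forall}_{\text{Qual},\CoBuchi(F)}(\A)\neq\emptyset$. Since emptiness of universal co-Büchi tree automata with qualitative semantics is undecidable, decidability of the \MSOzeropath-theory of the infinite binary tree would yield a contradiction.

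The sentence $\varphi_\A$ follows the natural pattern ``there exists an accepted tree'': it existentially quantifies over set variables $(X_a)_{a\in\Sigma}$ encoding a $\Sigma$-labeling, universally quantifies over set variables $(Y_q)_{q\in Q}$ encoding a candidate run, and requires that every valid run is qualitatively accepting. Concretely,
\[
\varphi_\A \;:=\; \exists (X_a)_{a\in\Sigma}.\ \mathrm{Part}(\vec X) \;\wedge\; \forall (Y_q)_{q\in Q}.\ \big(\mathrm{Run}(\vec Y,\vec X) \implies \mathrm{Acc}(\vec Y)\big),
\]
where $\mathrm{Part}(\vec X)$ expresses that the $X_a$ partition $\{0,1\}^*$ (this is a plain \MSO-formula), $\mathrm{Run}(\vec Y,\vec X)$ expresses that the $Y_q$ partition $\{0,1\}^*$, that $\epsilon\in Y_{q_\init}$, and that for every node $u$, if $u\in Y_q\cap X_a$ then the transition used at $u$ is in $\Delta$, namely $\bigvee_{(q,a,q_0,q_1)\in\Delta}(u\cdot 0\in Y_{q_0}\wedge u\cdot 1\in Y_{q_1})$. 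All of this is first-order expressible in \MSO.

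The qualitative co-Büchi acceptance condition $\mathrm{Acc}(\vec Y)$ is the point where the path-measure quantifier comes in: it should say that the $\mu$-measure of branches along which only finitely many nodes are labeled by a state in $F$ equals $1$. I will write this as
\[
\mathrm{Acc}(\vec Y) \;:=\; \forallonep P.\ \mathrm{Fin}\!\left(P\cap \bigcup_{q\in F} Y_q\right),
\]
where $P$ is a set variable implicitly ranging over paths (by the semantics of $\forallonep$) and $\mathrm{Fin}(Z)$ is an \MSO-formula expressing that $Z$ is a finite set of nodes; the latter is standard on the infinite binary tree (for instance, one can say that there is a depth bound: there is a finite initial sub-tree of $\{0,1\}^*$ containing $Z$, which in turn is \MSO-definable using the successor relations and closure properties). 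The correctness of this encoding is direct: unfolding the semantics of $\forallonep$ gives that $\mathrm{Acc}(\vec Y)$ holds on $t$ iff $\mu(\{b\in\{0,1\}^\omega : \text{only finitely many prefixes of } b \text{ lie in } \bigcup_{q\in F} Y_q\}) = 1$, which is exactly the condition that the run $\rho$ coded by $\vec Y$ satisfies $P_\A^\rho(\CoBuchi(F))=1$.

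The main technical care is in two small points. First, I need to make sure that $\mathrm{Fin}$ is expressible in pure \MSO over the infinite binary tree, which is classical; I will briefly justify this. Second, I need to verify that the set of paths for which $\mathrm{Fin}\!\left(P\cap \bigcup_{q\in F}Y_q\right)$ holds is indeed measurable, so that the $\forallonep$ quantifier is well-defined on it; this follows from the fact that this set is a countable Boolean combination of cylinders. Given these, one shows by construction that, for every $\Sigma$-tree $t$ interpreting the free variables $(X_a)_a$, and every interpretation of $(Y_q)_q$ as a $Q$-labeling $\rho$ of the same tree, the formula inside the existential quantifier encodes exactly ``$\rho$ is a run of $\A$ on $t$, and $\rho$ is qualitatively accepting''. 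Hence $\varphi_\A$ holds in the infinite binary tree iff there exists a tree $t$ all of whose runs are qualitatively accepting, that is, iff $L^{\forall}_{\text{Qual},\CoBuchi(F)}(\A)\neq\emptyset$. Combining this effective reduction with the undecidability of the latter problem (Theorem~\ref{thm:undecidability-univcoB}) concludes the proof.
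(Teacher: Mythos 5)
Your proposal is correct and follows essentially the same route as the paper: both reduce from Theorem~\ref{thm:undecidability-univcoB} by writing a sentence of the shape $\exists \vec X\, \forall \vec Y\, (\text{``}\vec Y\text{ is a run on }\vec X\text{''} \Rightarrow \forallonep Z.\ \text{``the branch }Z\text{ visits }F\text{ finitely often''})$, with the run relation and the finiteness condition expressed in plain \MSO. The only differences (one set variable per letter versus a $\{0,1\}^n$ encoding of $\Sigma$, and asserting non-emptiness rather than negating an emptiness formula) are immaterial.
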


\begin{proof}
We reduce the emptiness problem for co-Büchi tree automata with qualitative semantics, 
that we proved undecidable (Theorem~\ref{thm:undecidability-univcoB}), 
to the \MSOzeropath-theory of the infinite binary tree.

Let $\A$ be a co-Büchi tree automaton over the alphabet $\Sigma$.
Without loss of generality, we assume that $\Sigma \subseteq \set{0,1}^n$ for some $n$. Note that, as \MSOzeropath-formulas are interpreted over the (unlabelled) infinite binary tree, we use tuples of subsets of nodes to encode $\Sigma$-trees. 
We construct an \MSOzeropath {formula} $\varphi(\vec{X})$, with $\vec{X}=(X_1,\dots,X_n)$, such that 
\[
L^{\forall}_{\text{Qual},\CoBuchi(F)}(\A) = \set{t \mid \mathrm{setTuple}(t) \models \varphi }.
\]

The formula $\varphi$ mimics the definition of $L^{\forall}_{\text{Qual},\CoBuchi(F)}(\A)$:
\[
\forall \vec Y. (\text{``$\vec Y$ is a run of $\A$ on $\vec X$''} \Rightarrow 
\forallonep Z.(\text{``$Z$ is an accepting path of $\vec Y$''})),
\]
where 
``$\vec Y$ is a run of $\A$ on $\vec X$'' and 
``$Z$ is an accepting path of $\vec Y$'' are expressed in first-order logic (we refer to~\cite{Rabin:TAMS69} for this classical encoding). The desired formula is then $\neg\exists \vec X \varphi$, {which achieves the proof.}
\end{proof}

\subsection{Undecidability of the Emptiness Problem for Alternating Tree Automata with Non-zero Semantics}
\label{sec:nonz}

The non-zero semantics {for tree automata} was introduced by Boja{\'n}czyk, Gimbert and
Kelmendi~\cite{BGK17}.  In a recent paper, Fournier and Gimbert
initiated the study of alternating {tree} automata with non-zero
semantics~\cite{fournier2018alternating}.  Their main result is the
decidability of the emptiness problem for a subclass of these
automata, called \emph{limited choice for \Abelard}, and this is used to
solve the satisfiability problem of CTL$^*$+pCTL$^*$; however the
decidability of emptiness for the full class of alternating
automata with non-zero semantics was left open.  Since this class
easily subsumes universal tree automata with qualitative semantics,
Theorem~\ref{thm:undecidability-univcoB} directly implies that this
problem is undecidable.

An \concept{alternating non-zero automaton} on alphabet $\Sigma$ is a tuple $$\A = ((\setstates,\prec),\state_\init,\setstates_E,\setstates_A,\trans,F_{\forall},F_1,F_{>0})$$ where $\setstates$ is a finite set of states equipped with a total order $\prec$, $\state_\init\in \setstates$ is the initial state, $(\setstates_E,\setstates_A)$ is a partition of $\setstates$ into \Eloise's and \Abelard's states, $\trans$ is a set  of transitions made of \emph{local transitions} (elements of $\setstates\times\Sigma\times \setstates$) and \emph{split transitions} (elements of $\setstates\times\Sigma\times \setstates\times\setstates $), and $F_{\forall},F_1,F_{>0}\subseteq Q$ are subsets of $Q$ defining the semantics of the acceptance game {$\arena_{\A,t}^{\text{n-z}}$, to be defined later}.

The input of such an automaton is a $\Sigma$-tree $\tree$ and
acceptance is defined thanks to a two-player perfect-information stochastic game. The arena is quite similar to the arena $\arena_{\A,t}^{=1}$ defined in Section~\ref{sec:alt-tree-automata} for alternating tree automata with qualitative semantics (simply ignore the total order $\prec$ and subsets $F_{\forall},F_1,F_{>0}$), except that local transitions are handled without interacting with the Random player (i.e. when \Eloise or \Abelard simulates a local transition the state is simply updated and the pebble stays in the same node). 

Formally one lets $G = (V_E \cup V_A \cup V_R,E)$ 
with $V_E = Q_E\times\{0,1\}^*$, $V_A =Q_A\times \{0,1\}^*$ 
and $V_R = \{(q,u,q_0,q_1) \mid u \in \{0,1\}^* \text{ and } (q,t(u),q_0,q_1) \in \Delta\}$, and
$$\begin{array}{ll}
E \qquad = \qquad & \{((q,u),(q',u)) \mid (q,t(u),q') \in \Delta)\} \quad \cup \\ 
 & \{((q,u),(q,u,q_0,q_1)) \mid (q,t(u),q_0,q_1) \in \Delta)\} \quad \cup \\ 
&\{((q,u,q_0,q_1),(q_x,u \cdot x)) \mid x \in \{0,1\} \text{ and } (q,u,q_0,q_1) \in V_R)\}\ 
\end{array}$$

Then, we define $\arena_{\A,t}^{\text{n-z}} = (G,\VE,\VA,\VR,\delta,(q_\init,\epsilon))$ where $\delta((q,u,q_0,q_1))=\frac{1}{2}(q_0,u0) + \frac{1}{2}(q_1,u1)$.

A strategy $\sigma_E$ for {\Eloise} \concept{beats} a strategy for \Abelard $\sigma_A$ if all the following conditions are satisfied:
\begin{itemize}
	\item[(i)] \emph{Sure winning}: in every play consistent with $(\sigma_E,\sigma_A)$ the largest (with respect to $\prec$) state appearing infinitely often belongs to $F_\forall$.
	\item[(ii)] \emph{Almost-sure winning}: the (measurable) set of plays consistent with $(\sigma_E,\sigma_A)$ where the largest state (with respect to $\prec$) appearing infinitely often belongs to $F_1$ has measure $1$.
	\item[(iii)] \emph{Positive winning}: for every history consistent with $(\sigma_E,\sigma_A)$ 
	that ends with a state in $F_{>0}$, the (measurable) set of infinite continuations of this history that contain only states in $F_{> 0}$ and are consistent with $(\sigma_E,\sigma_A)$, has non-zero measure.
\end{itemize}

Finally, a tree $t$ is \concept{accepted} by $\A$ if, and only if, \Eloise has a strategy that beats any strategy of \Abelard. The emptiness problem asks for a given alternating non-zero automaton whether the set of accepted trees is empty.

It is easily seen that alternating automata with non-zero semantics subsume universal co-Büchi tree automata with qualitative semantics. Indeed, consider a universal co-Büchi tree automaton $\A$ with qualitative semantics having a set of states $Q$ and a set of states $F\subseteq Q$  defining the co-Büchi condition. Then, universality is captured by alternation (see Proposition~\ref{proposition:altvsuniv}) and the co-Büchi qualitative acceptance condition of $\A$ can be expressed by part (ii) of the beating condition: it is enough to rank the  states in $F$ higher than those in $Q \setminus F$ in the total order on $Q$, and to let $F_1=Q \setminus F$. 

Together with Theorem~\ref{thm:undecidability-univcoB} this yields the following undecidability result.

\begin{theorem}\label{thm:alternating}
The emptiness problem for alternating tree automata with non-zero semantics is undecidable.
\end{theorem}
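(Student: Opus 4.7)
The plan is to prove Theorem~\ref{thm:alternating} by a straightforward polynomial reduction from the emptiness problem for universal co-Büchi tree automata with qualitative semantics, which is undecidable by Theorem~\ref{thm:undecidability-univcoB}. The paragraph preceding the theorem statement already contains the key insight; I would simply formalise this embedding and check that the three clauses of the beating condition correctly recover qualitative co-Büchi acceptance.

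Concretely, given a universal co-Büchi tree automaton $\A = (Q, q_\init, \Delta)$ with co-Büchi set $F \subseteq Q$, I would build an alternating non-zero automaton $\A' = ((Q,\prec), q_\init, \emptyset, Q, \Delta', F_\forall, F_1, F_{>0})$ as follows. First, following the construction of Proposition~\ref{proposition:altvsuniv}, interpret every state of $\A$ as belonging to \Abelard, and turn every transition $(q,a,q_0,q_1) \in \Delta$ into a split transition of $\A'$ (no local transitions are used). Second, define the total order $\prec$ on $Q$ so that every state in $F$ is strictly greater than every state in $Q \setminus F$. Third, set $F_1 = Q \setminus F$ and let $F_\forall = F_{>0} = Q$ so that conditions (i) and (iii) of the beating condition become trivially satisfied regardless of the play.

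With this choice, condition (ii) asks that, almost surely, the $\prec$-largest state appearing infinitely often along a play lies in $Q \setminus F$; by definition of $\prec$, this is equivalent to requiring that states in $F$ appear only finitely often along almost all plays. Since strategies of \Abelard in $\arena_{\A',t}^{\text{n-z}}$ are in bijection with runs of the universal automaton $\A$ on $t$, and the random choices match those of the acceptance game for the qualitative semantics, this is precisely the statement that every run of $\A$ on $t$ is qualitatively accepting for the co-Büchi condition $F$. Thus $\A'$ accepts $t$ if and only if $t \in L^{\forall}_{\text{Qual},\CoBuchi(F)}(\A)$, so $\A'$ has empty language if and only if $\A$ does.

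There is no real obstacle here beyond being careful with the correspondence between the two acceptance games: the arena $\arena_{\A',t}^{\text{n-z}}$ differs from $\arena_{\A,t}^{=1}$ only through the absence of local transitions in $\A'$ and the reinterpretation of the winning condition, neither of which affects the bijection between plays. Combining this reduction with Theorem~\ref{thm:undecidability-univcoB} yields undecidability of the emptiness problem for alternating tree automata with non-zero semantics.
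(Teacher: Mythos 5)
Your proposal is correct and matches the paper's proof essentially verbatim: both reductions put all states under \Abelard's control, keep only split transitions, order the states so that $F$ sits above $Q\setminus F$, set $F_\forall=F_{>0}=Q$ to trivialise clauses (i) and (iii), and set $F_1=Q\setminus F$ so that clause (ii) coincides with qualitative co-Büchi acceptance. Nothing further to add.
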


\begin{proof}
Consider a co-Büchi universal tree automaton $\A=(\setstates, q_\init, \trans)$
whose acceptance condition is given by a subset
$F\subseteq \setstates$.  Without loss of generality, we can safely assume
that $\setstates = \{q_1,\dots q_n\}$ where $n=|Q|$ and that
$F=\{q_k,\dots,q_n\}$ for some $k\leq n+1$.  We construct an
alternating non-zero automaton
${\B} =
((\setstates,\prec),q_\init,\emptyset,\setstates,\trans,\setstates,F_1,\setstates)$,
where the total order $\prec$ on $Q$ is defined by $q_i \prec q_j$ if
and only if $i<j$ and  $F_1=Q\setminus F$. 

Note that {since $\A$ has only split transitions, the arenas $\arena_{\A,t}^{=1}$ and $\arena_{\B,t}^{\text{n-z}}$ are the same  for any tree $t$, and so are the strategies for \Eloise and \Abelard.} Moreover, it is immediate that an \Eloise's strategy $\sigma_E$ beats an \Abelard's strategy $\sigma_A$ in $\arena_{\B,t}^{\text{n-z}}$ if, and only if, almost all plays in $\arena_{\A,t}^{=1}$ consistent with $(\sigma_E,\sigma_A)$ {satisfy} the co-Büchi condition. Hence, \Eloise has a strategy that beats any strategy of \Abelard in $\arena_{\B,t}^{\text{n-z}}$ if and only if she has an almost-surely winning strategy in the co-Büchi game $(\arena_{\A,t}^{=1},F\times\{0,1\}^*)$. {Otherwise said, a tree is accepted by $\B$ if, and only if, it is accepted by $\A$.}

{Applying Theorem~\ref{thm:undecidability-univcoB},  concludes the proof.}
\end{proof}

\section*{Conclusions}
\label{sec:conclusions}

{The core contribution is the study of alternating automata with qualitative semantics and the identification of a sharp decidability frontier for their emptiness problem: the emptiness problem is decidable for B{\"u}chi objectives, but it is undecidable for the co-B{\"u}chi objectives. The latter  undecidability result directly implies the undecidability of \MSOzeropath in an elegant manner.
In an attempt to exhibiting a decidable extension of \MSO with a probabilistic operator, a natural track is to seek natural subclasses of alternating tree automata with qualitative semantics (or even of non-zero automata) with a decidable emptiness problem. However, while for alternating Buchi tree automata with qualitative semantics emptiness problem is decidable, their connection with a robust logic is unclear. The recent results concerning restrictions to thin quantification~\cite{Bojanczyk16,BGK17} and to limited choice for \Abelard~\cite{fournier2018alternating} bring hope and inspiration for the construction of such subclasses.}

\bibliographystyle{ACM-Reference-Format}
\bibliography{main}

\end{document}